\newtheorem{thm}{Theorem}
\newtheorem{lem}{Lemma}
\newtheorem{defn}{Definition}
\newtheorem{exmp}{Example}
\begin{document}
\title[The cosmological time functions and lightlike rays]{The cosmological time functions and lightlike rays}
\author{Fatemeh Koohestani}

\author{Neda Ebrahimi}

\author{Mehdi Vatandoost}

\author{Yousef Bahrampour}

\email{koohestanifateme@gmail.com}
\email{n$\underline{  \:\:}$ebrahimi@uk.ac.ir}
\email{m.vatandoost@hsu.ac.ir}
\email{bahram@uk.ac.ir}
\date{}
\dedicatory{\textit{\normalsize{$^{1, 2,3}$Department of Pure Mathematics, Faculty of Mathematics and Computer, Shahid Bahonar University of Kerman, Kerman, Iran,\\
$^{4}$Department of Mathematics and Computer Sciences, Hakim Sabzevari University.}}}
\maketitle
\begin{abstract}
It is proved that all discontinuity points of a finite cosmological time function, $\tau$, are on past lightlike rays. As a result, it is proved that if $(M,g)$ is a chronological space-time without past lightlike rays then there is a representation of $g$ such that its cosmological time function is regular. In addition, by reducing conditions of regularity sufficient conditions for causal simplicity and causal pseudoconvexity of space-time is given. It is also proved that the second condition of regularity can be reduced to satisfies only on inextendible past-directed causal rays if $(M,g)$ be a space-time, conformal with an open subspace of Minkowski space-time or $\tau$ be continuous.
\end{abstract}
{\bf PACS}: 04.20.-q.\\
   {\bf Keywords}:space-time, Globally hyperbolic, Cosmological time function, Lorentzian metric.


\section{\label{sec:level1}Introduction}
\noindent The concept of cosmological time function was defined in \cite{AG}. Time functions are important in study of the global causal theory of space-time. They can be defined arbitrary and may have little physical significance. But the cosmological time function is defined canonically and consequently, study of it gives us important information about space-time.
Let us recall its definition.\\
Let $(M,g)$ be a space-time and $d:M\times M\rightarrow [0,\infty]$ be the Lorentzian distance function. The cosmological time function $\tau:M\rightarrow [0,\infty)$ is defined by:
$$\tau(q):=sup_{p\leq q} d(p,q),$$
\noindent where $p\leq q$ means that $q\in J^{+}(p)$.\\
A time function on the space-time $(M,g)$ in the usual sense is a real valued continuous function which is strictly increasing on causal curves. The existence of such a function on $M$ requires causal stability (no closed causal curve in any Lorentzian metric sufficiently near the space-time metric exists). Although $\tau$ is not always well behaved, if it is regular then it is a Cauchy time function \cite{AG}.
\begin{defn}\cite{AG} The cosmological time function $\tau$ of $(M,g)$ is regular if and only if:
\begin{itemize}
\item $\tau (q)<\infty $, for all $q\in M,$
\item $\tau\rightarrow 0$ along every past inextendible causal curve.
\end{itemize}
\end{defn}
\noindent The first condition is an assertion that for each point $q$ of the space-time any particle that passes through $q$ has been in existence for a finite time (the space-time has an initial singularity in the strong sense). The second condition asserts that every particle came into existence at the initial singularity.\\
\noindent it is proved \cite{AG} that if the cosmological time function of $(M,g)$ is regular then:
\begin{itemize}
\item it is an almost every where differentiable time function;
\item $(M,g)$ is globally hyperbolic;
\item For each $q\in M$ there is a future-directed timelike ray $\gamma_{q}:(0,\tau(q)]\rightarrow M$ that realizes the distance from the initial singularity to $q$, that is $\gamma_{q}$ is future-directed timelike unit geodesic which is maximal on each segment, such that:\\
\begin{equation}
\gamma_{q}(\tau(q))=q,~~~~\tau(\gamma_{q}(t))=t
,~~~~ for ~t\in (0,\tau(q)].
\end{equation}
\end{itemize}
We recall that since the Lorentzian distance is not conformally invariant, $\tau$ is not too and even if $(M,g)$ is globally hyperbolic, $\tau$ is not necessarily regular.
But it is proved in \cite{NE} that if $(M,g)$ is globally hyperbolic then there is a smooth real function $\Omega >0$ such that $\tau$ is regular, on $(M,\Omega g)$. This gives a characterization for global hyperbolicity.
\\ It is also proved \cite{CH} that the regular $\tau$ is differentiable at $q$ iff there exists only one timelike ray of the form (1).\\
In the whole of this paper we suppose that $\tau$ is finite.
We reduce the second condition of regularity and investigate its effects on the cosmological time function and causal properties of space-time.\\
In the second section, we recall some definitions and theorems which are used widely in this paper.
In the third section of this paper it is proved that a point of discontinuity appears on a past lightlike ray. Hence $\tau$ is a time function in space-times with no past lightlike rays. Using this time function it can be shown by a simple proof that non-totally vicious space-times without lightlike rays are globally hyperbolic. A different proof was given in \cite{MC}.\\
In addition, it can be asked that is it possible to reduce the second condition in the following way:\\
$\tau\rightarrow 0$, along every past inextendible causal geodesic.\\
The following example shows that this is not the case in general.
\begin{exmp} \cite{AG} Let $M:=\{(x,y,t)\in S^1 \times \mathbb{R} \times \mathbb{R} : t>-1\}$ with the metric
\begin{center}
$ g:= dy^2 +e^{2y}(dxdt+(\vert t \vert^{2\alpha} +(e^{y^{2}}-1))dx^2 ).$
\end{center}
\end{exmp}
Although $\tau$ is going to zero along past inextendible causal geodesics it does not go to zero along causal curves.\\
In section 4, we investigate some conditions on $(M,g)$ and $\tau$ which imply that $\tau$ is regular in this situation. In addition, Using the cosmological time function sufficient conditions for causal simplicity and causal pseudoconvexity of space-time is given.\\
\section{preliminaries}
The standard notations from Lorentzian geometry are used in this paper. The reader is referred to \cite{B, P}. We denote with $(M,g)$ a $C^{\infty}$ space-time (a connected, Hausdorff, time orientted Lorentzian manifold) of dimension $n\geq 2$ and signature $(-,+,...,+)$. If $p, q\in M$, then $q\in I^{+}(p)$ (resp. $q\in J^{+}(p)$) means that there is a future directed timelike (resp. causal) curve from $p$ to $q$. $I^{+}(p)$ is called the chronological future and $J^{+}(p)$ the causal future of $p$. Likewise $I^{-}(p)$ and $J^{-}(p)$ are defined and are called chronological and causal past of $p$. $(M,g)$ is causal (resp. chronological) if there is no closed causal (resp. timelike) curve in it. A causal space-time is globally hyperbolic if $J^{+}(p)\cap J^{-}(q)$ be compact, for every $p, q\in M$ and is causally simple if $J^{\pm}(p)$ be closed, for every $p\in M$. In addition, a space-time is strongly causal if every point of it has arbitrary small causally convex neighbourhoods. The interested reader is referred to \cite{M2} for more details.\\
If $q\in J^{+}(p)$ the Lorentzian distance $d(p,q)$ is the supremum of the length of all causal curves from $p$ to $q$ and if $q\notin J^{+}(p)$ then $d(p,q)=0$. If $(M,g)$ is strongly causal then there is a conformal class of $g$ such that $d$ and $\tau$ are bounded.
\begin{lem}\cite{M}
Let $h$ be an auxiliary complete Riemannian metric on $M$ and let $\rho$ be the associated distance. Let $q\in M$ and $B_{n}(q)=\{r: \rho(q,r)<n\}$ be the open ball of radius $n$ centered at $q$. If $(M, g)$ is strongly causal, then there is a smooth function $\Omega>0$, such that $\textmd{diam}(M,\Omega g)=sup \{d(p,q), p,q\in M\}$ is finite and for every $\epsilon>0$ there is $n\in N$ such that if $\gamma:I\rightarrow M$ is any $C^{1}$ causal curve,
$$\int_{I\cap \gamma^{-1}(M-\overline{B_{n}(q)})}\sqrt{-g(\dot{\gamma},\dot{\gamma})}dt<\epsilon$$
that is, its many connected pieces contained in the open set $M-\overline{B_{n}(q)}$ have a total Lorentzian length less than $\epsilon$.
\end{lem}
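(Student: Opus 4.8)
The plan is to build the conformal factor $\Omega$ by hand from a compact exhaustion of $M$, rescaling ever more strongly towards ``infinity'', and to estimate an arbitrary causal curve one exhausting shell at a time. Since $h$ is complete, Hopf--Rinow ensures that each closed ball $\overline{B_n(q)}$ is compact and that $M=\bigcup_{n\ge 1}B_n(q)$; set $C_k:=\{r\in M:\ k\le\rho(q,r)\le k+1\}$, so that each $C_k$ is compact, the family $\{C_k\}_{k\ge 0}$ is locally finite and covers $M$, and $M\setminus\overline{B_n(q)}\subseteq\bigcup_{k\ge n}C_k$.

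The core local estimate is that every $p\in M$ has an open neighbourhood $U$ and a constant $\ell_U<\infty$ such that any $C^1$ causal curve meets $U$ in a connected piece of $g$-length at most $\ell_U$. Indeed, pick a globally hyperbolic convex normal neighbourhood $W'$ of $p$, then a smaller neighbourhood $W$ with $\overline W$ compact and $\overline W\subseteq W'$, and finally, using strong causality, a causally convex open neighbourhood $U\subseteq W$ of $p$. Causal convexity makes $\gamma^{-1}(U)$ an interval, so $\gamma$ meets $U$ in a single connected piece; that piece lies in $W'$, and since in a convex normal neighbourhood the $g$-length of a causal curve is bounded by the Lorentzian distance between its endpoints, its $g$-length is at most $\ell_U:=\sup_{\overline W\times\overline W}d_{W'}<\infty$ (finite because $d_{W'}$ is continuous on the globally hyperbolic $W'$ and $\overline W\times\overline W$ is compact). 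Covering the compact shell $C_k$ by finitely many such neighbourhoods and adding their constants produces a constant $L_k<\infty$ bounding the $g$-length of $\gamma\cap C_k$ for every $C^1$ causal curve $\gamma$.

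Now choose, by a routine partition-of-unity argument, a smooth $\Omega>0$ on $M$ with $\Omega\le 4^{-k}L_k^{-2}$ on $C_k$ for all $k$. Then on $C_k$ one has $\sqrt{-\Omega g(\dot\gamma,\dot\gamma)}\le 2^{-k}L_k^{-1}\sqrt{-g(\dot\gamma,\dot\gamma)}$, so the $\Omega g$-length of $\gamma\cap C_k$ is at most $2^{-k}$. Hence any causal curve from $p$ to $p'$ has $\Omega g$-length at most $\sum_{k\ge 0}2^{-k}=2$, so that $d_{\Omega g}$ is globally bounded and $\operatorname{diam}(M,\Omega g)<\infty$; and given $\epsilon>0$, choosing $n$ with $\sum_{k\ge n}2^{-k}<\epsilon$ forces the rescaled length of the part of any causal curve lying in $M\setminus\overline{B_n(q)}\subseteq\bigcup_{k\ge n}C_k$ to be less than $\epsilon$.

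The one subtle point --- and the only place where strong causality is genuinely used --- is the per-shell bound above: \emph{a priori} a causal curve may re-enter a fixed compact set infinitely often, so finiteness of its length there is not automatic. Strong causality removes this, since it supplies causally convex neighbourhoods (which a causal curve enters at most once) while a small convex normal neighbourhood has finite Lorentzian diameter, so ``infinitely many visits'' becomes ``finitely many pieces of bounded total length''; the rest is bookkeeping with a convergent geometric series. (The $g$ inside the integral in the statement is, of course, to be read with respect to the rescaled metric $\Omega g$.)
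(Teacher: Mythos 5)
Your proof is correct: the compact-shell exhaustion via Hopf--Rinow, the per-shell length bound obtained from strong causality (causally convex sets inside globally hyperbolic convex normal neighbourhoods, each met by a causal curve in a single piece of finite $g$-length), and the summable conformal rescaling $\Omega\le 4^{-k}L_k^{-2}$ give exactly the stated conclusions, including the correct reading that the integrand is taken with respect to $\Omega g$. Note that the paper itself offers no proof of this lemma---it is quoted from Minguzzi's cited work---and your argument is essentially the standard construction used there, so there is nothing to add.
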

The limit curve theorem will be used several times. The reader is referred to \cite{ML} for a strong formulation. All the curves are parametrized by $h$- arc length, where $h$ is a complete Riemannian metric. A past (future) ray in $(M,g)$ is a maximal past (future) inextendible causal geodesic $\gamma:[0,\infty)\rightarrow M$.\\ A sequence of causal curves $\gamma_{n}:[a_{n},b_{n}]\rightarrow M$ is called limit maximizing if:\\
$$L(\gamma_{n})\geq d(\gamma_{n}(a_{n}), \gamma_{n}(b_{n}))-\epsilon_{n},$$ where $\epsilon_{n}\rightarrow 0$.\\
The following lemma is used in section 3 and 4.
\begin{lem}\cite{G2} Let $z_{n}$ be a sequence in $M$ with $z_{n}\rightarrow z$. Let $z_{n}\in I^{+}(p_{n})$ with finite $d(z_{n},p_{n})$. Let $\gamma_{n}:[0, a_{n}]\rightarrow M$ be a limit maximizing sequence of causal curves with $\gamma_{n}(0)=z_{n}$ and $\gamma_{n}(a_{n})=p_{n}$. Let $\widehat{\gamma_{n}}:[0,\infty)\rightarrow M$ be any future extension of $\gamma_{n}$. Suppose either:
\begin{itemize}
\item $p_{n}\rightarrow \infty$, i.e. no subsequent is convergent,
\item $d(z_{n},p_{n})\rightarrow \infty$.
\end{itemize}
Then any limit curve $\gamma:[0,\infty)\rightarrow M$ of the sequence $\widehat{\gamma_{n}}$ is a causal ray starting at $z$.
\end{lem}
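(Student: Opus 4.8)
The plan is to show that any $C^0$-limit $\gamma$ of (a subsequence of) the extensions $\widehat\gamma_n$ is defined on all of $[0,\infty)$, is future inextendible, and realises the Lorentzian distance between the endpoints of each of its compact subsegments; the last property forces $\gamma$ to be, up to reparametrisation, a causal geodesic, and together with future inextendibility this is exactly the assertion that $\gamma$ is a causal (in fact past-directed) ray starting at $z$.

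First I would note that the sole role of the two alternative hypotheses is to guarantee that the $h$-arc lengths $a_n$ tend to infinity. Indeed, suppose some subsequence $(a_{n_k})$ is bounded by $A$. Then each $\gamma_{n_k}$ lies in the closed $h$-ball of radius $A$ about $z_{n_k}$, hence, since $z_{n_k}\to z$, in a fixed compact set $K$; this already contradicts $p_n\to\infty$. On $K$ the ratio $\sqrt{-g(v,v)}/\sqrt{h(v,v)}$ is bounded on causal vectors by some $C_K$, so $L(\gamma_{n_k})\le C_K A$, and since (because $\gamma_n$ runs from $z_n$ backwards to $p_n$) the limit-maximizing condition reads $L(\gamma_n)\ge d(p_n,z_n)-\epsilon_n$, we get $d(p_{n_k},z_{n_k})\le C_K A+\epsilon_{n_k}$ along the subsequence, contradicting $d(z_n,p_n)\to\infty$. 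Hence $a_n\to\infty$ in either case. Applying the limit curve theorem \cite{ML} to $\widehat\gamma_n$ with $\widehat\gamma_n(0)=z_n\to z$, every subsequential limit $\gamma$ is a causal curve with $\gamma(0)=z$, defined on $[0,\infty)$, and future inextendible — the latter because a unit-$h$-speed causal curve on $[0,\infty)$ cannot converge to a point when $h$ is complete. Passing to such a subsequence we may assume $\widehat\gamma_n\to\gamma$ uniformly on compact subsets.

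Next I would show that $\gamma$ maximizes on each segment, i.e. $L(\gamma|_{[0,s]})=d(\gamma(s),z)$ for every $s>0$. Fix $s$; since $a_n\to\infty$ we have $a_n>s$ for large $n$. The sub-curve $\gamma_n|_{[s,a_n]}$ runs from $\gamma_n(s)$ to $p_n$, so $L(\gamma_n|_{[s,a_n]})\le d(p_n,\gamma_n(s))$, while the reverse triangle inequality (valid since $p_n\le\gamma_n(s)\le z_n$ along $\gamma_n$) gives $d(p_n,z_n)\ge d(p_n,\gamma_n(s))+d(\gamma_n(s),z_n)$. Combining these with $L(\gamma_n)\ge d(p_n,z_n)-\epsilon_n$,
\[ L(\gamma_n|_{[0,s]})=L(\gamma_n)-L(\gamma_n|_{[s,a_n]})\ \ge\ d(p_n,z_n)-\epsilon_n-d(p_n,\gamma_n(s))\ \ge\ d(\gamma_n(s),z_n)-\epsilon_n . \]
Now let $n\to\infty$. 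By upper semicontinuity of Lorentzian length under uniform convergence of causal curves and lower semicontinuity of $d$, together with $\gamma_n(s)\to\gamma(s)$, $z_n\to z$ and $\epsilon_n\to0$,
\[ L(\gamma|_{[0,s]})\ \ge\ \limsup_{n\to\infty} L(\gamma_n|_{[0,s]})\ \ge\ \liminf_{n\to\infty} d(\gamma_n(s),z_n)\ \ge\ d(\gamma(s),z), \]
and since the reverse inequality $L(\gamma|_{[0,s]})\le d(\gamma(s),z)$ is trivial, we obtain equality. An analogous bookkeeping with the reverse triangle inequality upgrades this to: $\gamma|_{[s_1,s_2]}$ realises $d(\gamma(s_2),\gamma(s_1))$ for all $0\le s_1<s_2$.

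Finally, a causal curve that realises the Lorentzian distance between the endpoints of each of its subsegments is, after reparametrisation, a causal geodesic; being in addition future inextendible on $[0,\infty)$, $\gamma$ is a causal ray starting at $z$, which proves the lemma. I expect the step needing the most care to be the passage to the limit in the third paragraph: one must use exactly the semicontinuity that is available in an arbitrary space-time — upper semicontinuity of $L$ and lower semicontinuity of $d$ — and one must notice the simple but essential point that the two hypotheses are used only to force $a_n\to\infty$, so that the otherwise arbitrary behaviour of $\widehat\gamma_n$ beyond $p_n$ cannot destroy the maximality of $\gamma$.
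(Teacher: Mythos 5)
The paper does not prove this statement at all: it is Lemma 2, quoted verbatim (up to notation) from \cite{G2} (Eschenburg--Galloway), so there is no in-paper argument to compare against. Your proof is correct and is essentially the standard argument behind that reference and Minguzzi's limit curve machinery: the two hypotheses serve only to force $a_{n}\rightarrow\infty$ (via Hopf--Rinow compactness in the first case and the compact-set bound $L\leq C_{K}A$ in the second), after which the limit curve theorem, upper semicontinuity of the Lorentzian length, lower semicontinuity of $d$, and the reverse triangle inequality give maximality of every initial segment, hence of every subsegment, hence a causal ray from $z$. The only quibble is terminological: since $\gamma_{n}$ runs from $z_{n}$ into its causal past toward $p_{n}$, the extensions and the inextendibility of $\gamma$ are past-directed in time (the lemma's wording ``future extension'' is the same parameter-versus-time blur, and the paper itself uses ``past inextendible extension'' when it applies the lemma), so your phrase ``future inextendible'' should be read as inextendibility in the forward parameter, i.e.\ past inextendibility of the ray.
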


 \section{Continuity of cosmological time function and lightlike rays}
The cosmological time function $\tau$ is not a time function in general, but if $\tau<\infty$ it has the following property:\\
\begin{equation}
q\in J^{+}(p)\Rightarrow \tau(p)+d(p,q)\leq \tau(q).
\end{equation}
This implies that $\tau$ is isotone.
\begin{defn} A function $t:M\rightarrow R$ which satisfies $q\in J^{+}(p)\Rightarrow t(p)\leq t(q)$ is said to be isotone.
\end{defn}
It is proved in \cite{M2} that isotones are almost everywhere continuous and differentiable.
\begin{thm}
Every isotone function $f:M\rightarrow R$ on $(M,g)$ is almost everywhere continuous and almost everywhere differentiable. Moreover, it is differentiable at $ p\in M $ iff it is $G\widehat{a}teaux$- differentiable at p. Finally, if $x:I \rightarrow M $ is a timelike curve, the isotone function $f$ is upper/lower semi-continuous at $x_0 = x(t_0 )$ iff $f\circ x$ has the same property at $t_0$.
\end{thm}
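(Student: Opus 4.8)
The plan is to reduce the whole statement to a fact about real functions on Euclidean space that are monotone with respect to a fixed open convex cone. Fix $p\in M$ and a chart $(U;x^{0},\dots ,x^{n-1})$ about $p$ in which $\partial_{0}$ is timelike; shrinking $U$ we may fix an open convex cone $C\ni e_{0}$ so narrow that every straight segment contained in $U$ with tangent in $C$ is a future-directed timelike curve of $g$. Isotonicity then forces $f(x)\le f(y)$ whenever $y-x\in C$ and $[x,y]\subset U$, so on $U$ the function $f$ is cone-monotone; since every assertion depends only on the germ of $f$ at $p$, it suffices to treat cone-monotone functions on an open subset of $\mathbb R^{n}$. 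Two elementary remarks drive everything: (i) along each line in a cone direction $f$ restricts to a nondecreasing function, hence by Lebesgue's theorem it is locally bounded, continuous off a countable set and differentiable a.e.\ on that line; (ii) for unit $v$ and a constant $M=M(C)$ one has $t(Me_{0}-v)\in C$ and $t(Me_{0}+v)\in C$, so $x-Mt\,e_{0}$ and $x+Mt\,e_{0}$ straddle $x+tv$ in the cone order and therefore $f(x-Mt\,e_{0})\le f(x+tv)\le f(x+Mt\,e_{0})$.

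For a.e.\ continuity I would use (ii) to bound the oscillation of $f$ at a point in terms of the variation of $f$ along nearby $e_{0}$-lines; combined with (i) this shows that the (Borel) set of discontinuity points of $f$ meets every $e_{0}$-line in a countable set, hence is Lebesgue-null by Fubini (in particular $f$ is Lebesgue measurable). For a.e.\ differentiability, let $G$ be the set where the two-sided $e_{0}$-derivative $f'(x)$ exists and is finite; by (i) its complement meets every $e_{0}$-line in a null set, hence is null by Fubini. If $x\in G$ then (ii) gives $|f(y)-f(x)|\le (Mf'(x)+o(1))\,|y-x|$ as $y\to x$, so $\limsup_{y\to x}|f(y)-f(x)|/|y-x|<\infty$ on the full-measure set $G$, and Stepanov's differentiability theorem yields differentiability of $f$ a.e. I expect the main obstacle to be precisely this passage from one-dimensional monotonicity to full differentiability a.e.: once the pointwise-Lipschitz estimate is in hand one must either quote Stepanov's theorem or reprove the relevant Vitali-covering argument in the cone setting.

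The equivalence with G\^ateaux differentiability at $p$ has content only in the direction ``G\^ateaux $\Rightarrow$ differentiable''. Assume the G\^ateaux differential $L$ of $f$ at $p$ exists, so $D_{t}(v):=t^{-1}\bigl(f(p+tv)-f(p)\bigr)\to L(v)$ pointwise as $t\to 0^{+}$. The point is that each $D_{t}$ is \emph{itself} cone-monotone in $v$; hence if $|v-w|$ is small one can straddle, $w-C_{1}|v-w|\,e_{0}\le v\le w+C_{1}|v-w|\,e_{0}$ in the cone order, so $D_{t}(v)$ is squeezed between the values of $D_{t}$ at two points that differ from $w$ only by multiples of $e_{0}$. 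Covering the unit sphere by $\delta$-balls around finitely many $v_{1},\dots,v_{N}$ and using pointwise convergence of $D_{t}$ at the finitely many points $v_{j}\pm C_{1}\delta e_{0}$ (whose limits lie within $C_{2}\delta$ of $L$ by linearity of $L$), one obtains $|D_{t}(v)-L(v)|<C_{2}\delta+\varepsilon$ uniformly in $|v|=1$ for $t$ small, which is Fr\'echet differentiability of $f$ at $p$; the converse is trivial.

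Finally, let $x:I\to M$ be timelike with $x_{0}=x(t_{0})$. If $f$ is upper semicontinuous at $x_{0}$ then $f\circ x$ inherits this at $t_{0}$ by continuity of $x$. For the converse, suppose $f\circ x$ is upper semicontinuous at $t_{0}$ while there are $q_{m}\to x_{0}$ with $f(q_{m})\ge f(x_{0})+\delta$ for some $\delta>0$. For each $s>0$ the open set $I^{-}(x(t_{0}+s))$ is a neighbourhood of $x_{0}$, so a diagonal argument produces $s_{m}\to 0^{+}$ with $q_{m}\in I^{-}(x(t_{0}+s_{m}))$; isotonicity gives $f(x(t_{0}+s_{m}))\ge f(q_{m})\ge f(x_{0})+\delta$, contradicting $\limsup_{s\to 0^{+}}f(x(t_{0}+s))\le f(x(t_{0}))$. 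The lower semicontinuous case is identical, with $I^{+}(x(t_{0}-s))$ and the reversed inequalities.
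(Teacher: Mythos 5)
The paper does not actually prove this statement: it is quoted as Theorem 1 from the reference [M2] (Minguzzi's review), so there is no internal proof to compare against. Judged on its own, your argument is correct and is essentially the standard proof from the literature: localize in a chart where isotonicity becomes monotonicity with respect to an open convex cone $C$ around a timelike coordinate direction, use one-dimensional monotonicity along $e_{0}$-lines together with the straddling inequality $f(x-Mt\,e_{0})\le f(x+tv)\le f(x+Mt\,e_{0})$ to control oscillation (a.e.\ continuity via Fubini) and to get a pointwise Lipschitz bound on a full-measure set (a.e.\ differentiability via Stepanov); the compactness-of-the-sphere squeeze shows G\^ateaux implies Fr\'echet for cone-monotone functions; and the $I^{\pm}$ ``push-up'' argument along the timelike curve gives the semicontinuity equivalence. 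Only minor points deserve tightening in a full write-up: the a.e.\ claims are global, so you should cover $M$ by countably many such charts rather than argue ``by germs''; Fubini and Stepanov need the measurability you only mention in passing (the oscillation set is $F_{\sigma}$, and the Dini derivatives of a measurable function are measurable, so this is routine but should be said); and in the last part the diagonal argument produces the inclusion $q_{m}\in I^{-}(x(t_{0}+s_{m}))$ only along a subsequence, which is all you need but is worth stating precisely.
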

Hence the cosmological time function is almost everywhere continuous. The question is where a point of discontinuity appears.
\begin{thm} Let $(M,g)$ be a space-time with finite cosmological time function $\tau$. If $\tau$ be discontinuous at $q\in M$ then it lies on a past lightlike ray.
\end{thm}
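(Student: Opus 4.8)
The plan is to argue by contradiction, extracting a past lightlike ray at $q$ from the failure of continuity. Since $\tau$ is isotone and hence (by Theorem~2 above) lower/upper semicontinuity of $\tau$ along a timelike curve through $q$ is governed by $\tau\circ x$, a discontinuity at $q$ means we can find a sequence $q_n\to q$ with $\tau(q_n)\not\to\tau(q)$. Using isotonicity (2) and the fact that near $q$ we can push points slightly to the past or future along a timelike curve, one reduces to the case of a genuine jump: there is $\delta>0$ and a sequence $q_n\to q$ with, say, $\tau(q_n)\le\tau(q)-\delta$ for all $n$ (the other inequality is handled symmetrically, using that $\tau$ cannot jump up at $q$ without jumping down at points to the future, again via (2)). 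For each $n$ pick $p_n\le q_n$ with $d(p_n,q_n)\ge\tau(q_n)-1/n$, and a limit-maximizing sequence of causal curves $\gamma_n$ from $q_n$ back to $p_n$ (reversing time orientation so these are past-directed, i.e.\ we apply Lemma~3 to the past).

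The key dichotomy is whether the basepoints $p_n$ accumulate somewhere in $M$ or escape to infinity. First I would rule out (or exploit) convergence of a subsequence of $p_n$. If $p_n\to p\in M$ along a subsequence, then $p\le q$ by closedness of the causal relation along the limit curve, and the limit curve theorem produces a maximal causal curve from $q$ to $p$ of length $\ge\limsup d(p_n,q_n)=\limsup\tau(q_n)$; combined with $\tau(p)\ge 0$ and (2) applied to $p\le q$ this forces $\tau(q)\ge d(p,q)\ge\limsup\tau(q_n)$, which is consistent, so I must instead squeeze the \emph{lower} bound: by semicontinuity of $\tau$ at $p$ along the approximating curves and the jump hypothesis one derives $\tau(q)\le\liminf\tau(q_n)+$ (length of limit curve) $\le\tau(q)-\delta+\dots$, a contradiction — more precisely, the limit curve from $q$ to $p$ together with a maximizing past curve from $p$ would realize $\tau$ at $q$ up to $\delta$, contradicting the definition of $\tau(q)$ as a supremum. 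Hence no subsequence of $p_n$ converges, i.e.\ $p_n\to\infty$ in the sense of Lemma~3.

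Now Lemma~3 applies directly: taking $z_n=q_n$, the limit-maximizing past-directed curves $\gamma_n$ from $q_n$ to $p_n$ with $p_n\to\infty$ have any limit curve $\gamma:[0,\infty)\to M$ a past causal ray starting at $q$. It remains to show this ray is \emph{lightlike}, not timelike. Here is where I expect the main obstacle. Suppose the ray $\gamma$ were timelike; then for small $t>0$ the segment $\gamma|_{[0,t]}$ is a timelike curve from $q$ into the past, so $\gamma(t)\in I^-(q)$, and by (2), $\tau(\gamma(t))+d(\gamma(t),q)\le\tau(q)$ with $d(\gamma(t),q)>0$, so $\tau(\gamma(t))<\tau(q)$ strictly but bounded below by $\tau(q)-t$ for $t$ small (the ray is locally maximizing). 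One then argues that the limit-maximizing property of $\gamma_n$ forces $\tau(\gamma_n(t_n))\to\tau(q)$ for suitable $t_n\to$ the parameter of $\gamma(t)$ — because along a maximizing past curve $\tau$ must decrease at unit rate, by (2) and the ray realizing distances — and this contradicts the assumed jump $\tau(q_n)\le\tau(q)-\delta$ pushed back along $\gamma_n$. The delicate point is controlling $\tau$ along the approximating curves $\gamma_n$ uniformly enough to pass to the limit; I would handle this by combining (2) applied to the pairs $\gamma_n(s)\le\gamma_n(0)=q_n$ (giving $\tau(\gamma_n(s))\le\tau(q_n)-d(\gamma_n(s),q_n)$, an \emph{upper} bound that shrinks) with the reverse estimate $\tau(q_n)\ge \tau(\gamma_n(s)) + L(\gamma_n|_{[0,s]}) - \epsilon_n$ coming from limit-maximality and the definition of $\tau(q_n)$ as a supremum over past points including $p_n$. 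The conclusion is that a timelike past ray at $q$ is incompatible with a jump of $\tau$ at $q$, so the ray extracted above must be lightlike, completing the proof.
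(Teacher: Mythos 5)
Your overall skeleton matches the paper's (limit-maximizing past curves from $q_n$, extension, Lemma~2 to get a causal ray at $q$, then exclude the timelike case), but there are genuine gaps, the most serious being the direction of the jump. Since $\tau(\cdot)=\sup_{p\le\cdot}d(p,\cdot)$ is a supremum of lower semi-continuous functions, $\tau$ is automatically lower semi-continuous; a discontinuity at $q$ therefore means failure of \emph{upper} semi-continuity, i.e.\ there exist $\epsilon>0$ and $q_n\to q$ with $\tau(q_n)\ge\tau(q)+\epsilon$. The case you work out in detail, $\tau(q_n)\le\tau(q)-\delta$, simply cannot occur (it already contradicts lower semi-continuity and needs no lightlike ray), while the genuine case is dismissed as ``handled symmetrically,'' which it is not: all the real work lies there, and your parenthetical about $\tau$ ``jumping down at points to the future'' is not an argument. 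Relatedly, your argument that no subsequence of $p_n$ converges does not work: realizing the supremum $\tau(q)$ up to $\delta$ is no contradiction of the definition of a supremum. The correct route is quantitative: from $d(p_n,q_n)\ge\tau(q_n)-1/n$ and inequality (2) one gets $\tau(p_n)\le 1/n\to 0$, and if $p_n\to p$ then lower semi-continuity forces $\tau(p)=0$, impossible since $I^{-}(p)\neq\emptyset$ gives $\tau>0$ everywhere.

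The second gap is the exclusion of a timelike limit ray, which you only gesture at. Your key claims --- that $\tau$ must decrease at unit rate along the maximizing curves and that $\tau(\gamma_n(t_n))\to\tau(q)$ --- are unjustified (the inequality (2) only gives an upper bound on $\tau$ along the $\gamma_n$, not a matching lower bound), and in any case they are aimed at the wrong-direction jump. The argument that actually closes the case uses the correct jump $\tau(q_n)\ge\tau(q)+\epsilon$ together with upper semi-continuity of the length functional: choose $b,\delta>0$ with $l(\gamma\vert_{[0,b]})+\delta\le\epsilon/2$, so $l(\gamma_n\vert_{[0,b]})\le\epsilon/2$ for large $n$, whence the tails satisfy $l(\gamma_n\vert_{[b,a_n]})\ge\tau(q_n)-1/n-\epsilon_n-\epsilon/2>\tau(q)$ eventually; if the ray were timelike then $\gamma_n(b)\in I^{-}(q)$ for large $n$ (openness of $I^{-}$), and the tail is a past causal curve from a point of $I^{-}(q)$, so $\tau(q)\ge l(\gamma_n\vert_{[b,a_n]})>\tau(q)$, a contradiction. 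Without this quantitative step (or an equivalent), your proposal does not establish that the ray is lightlike.
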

\begin{proof} Since $p\mapsto d(p,q)$, for every $q\in M$, is lower semi-continuous, $\tau$ is lower semi-continuous too. Hence it is not upper semi-continuous at $q$. Consequently, there is a sequence $\{q_{n}\}$ and $\epsilon>0$ that $q_{n}\rightarrow q$ and $\tau(q_{n})\geq \tau(q)+\epsilon$. Suppose that $\{p_{n}\}$ be a sequence in $M$ such that $d(p_{n},q_{n})\geq \tau(q_{n})-1/n$. (2) implies that $\tau(p_{n})+d(p_{n},q_{n})\leq\tau(q_{n})$. Since $\tau$ is finite we have $d(p_{n},q_{n})\leq \tau(q_{n})-\tau(p_{n})$ and consequently:\\
$\tau(q_{n})-1/n\leq d(p_{n},q_{n})\leq \tau(q_{n})-\tau(p_{n})$.\\
This implies that $\tau(p_{n})\rightarrow 0$. Let $\{\gamma_{n}\}$ be a limit maximizing sequence such that for each $n$, $\gamma_{n}:[0,a_{n}]\rightarrow M$, is a past-directed timelike curve, $\gamma_{n}(0)=q_{n},~\gamma_{n}(a_{n})=p_{n}$, i.e. $L(\gamma_{n}) > d(p_{n},q_{n})-\epsilon_{n}$ where $\epsilon_{n}\rightarrow 0$.\\
Let $\widehat{\gamma_{n}}: [0, \infty)\rightarrow M$ be any past inextendible extension of $\gamma_{n}$ and $\gamma:[0,a]\rightarrow M$ be the limit curve of $\widehat{\gamma_{n}}$ with $\gamma(0)=q$. $\{p_{n}\}$ diverges to infinity (if $p_{n}\rightarrow p$ then by lower semi continuity of $\tau$, $\tau(p)\leq liminf(\tau(p_{n}))$. This implies that $\tau(p)=0$, which is a contradiction). Lemma 2 implies that $\gamma$ is a causal ray. If $\gamma$ is not a null ray, it is timelike. As all the curves are parametrized by arc-length, $limsup( l(\gamma_{n}\vert_{ [0,b]}))\leq l(\gamma\vert_{[0,b]})$. There are $\delta,~b>0$ such that $l(\gamma\vert_{[0,b]})+\delta \leq \epsilon/2$. Consequently, $l(\gamma_{n}\vert_{[0,b]})\leq l(\gamma\vert_{[0,b]})+\delta\leq \epsilon/2$, for suffitiently large $n$.\\
$l(\gamma_{n}\vert_{[b,a_{n}]})=l(\gamma_{n})-l(\gamma_{n}\vert _{[0,b]})\geq \tau(q)+\epsilon/2 -1/n-\epsilon_{n}$.\\
Hence for sufficiently large $n$, $l(\gamma_{n}\vert_{[b,a_{n}]})>\tau(q)$. Since $\gamma$ is timelike, $\gamma_{n}(b)\in I^{-}(q)$, for suffitiently large $n$. Hence $\tau(q)\geq l(\gamma_{n}\vert_{[b,a_{n}]})>\tau(q)$ which is a contradiction.
\end{proof}
As an application of Theorem 2 the following two theorems can be proved.
\begin{thm}
Let $(M,g)$ be a space-time without past lightlike rays and $\tau< \infty$ then for each $q$ there is a future-directed unit speed timelike ray $\gamma_{q}:(0,\tau(q)]\rightarrow M$ which is maximal in each segment, such that:\\
$$\gamma_{q}(\tau(q))=q,~~~\tau(\gamma_{q}(t))=t,~~~t\in (0,\tau(q)].$$
\end{thm}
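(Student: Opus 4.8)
My plan is to reconstruct, from the single hypothesis ``no past lightlike rays'', the distinguished maximal timelike ray of \cite{AG}, substituting Theorem 2 and the limit curve machinery for the regularity assumptions.

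First I would use Theorem 2 to note that the absence of past lightlike rays forces $\tau$ to be continuous on $M$. Fix $q$, which we may assume has $\tau(q)>0$. Pick $p_{n}\leq q$ with $d(p_{n},q)\to\tau(q)$; then $p_{n}\in I^{-}(q)$ eventually, and (2) gives $\tau(p_{n})\leq\tau(q)-d(p_{n},q)\to 0$. Choose a limit-maximizing sequence of past-directed timelike curves $\sigma_{n}:[0,a_{n}]\to M$, with $\sigma_{n}(0)=q$, $\sigma_{n}(a_{n})=p_{n}$, parametrized by $h$-arc length, extend each to a past-inextendible $\widehat{\sigma_{n}}$, and pass to a limit curve $\gamma$ through $q$. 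If the lengths $a_{n}$ stay bounded, then $p:=\lim p_{n}$ exists, $\tau(p)=0$ by continuity, and the limit curve is a causal geodesic from $q$ to $p$ which, since $\limsup L(\sigma_{n})=\tau(q)$ while $d(p,q)\leq\liminf d(p_{n},q)=\tau(q)$ by lower semicontinuity of $d$, maximizes $d(p,q)=\tau(q)$; its backward arc-length reparametrization from $q$ is then the required $\gamma_{q}$, with $\tau(\gamma_{q}(t))=t$ coming from $\tau(\gamma_{q}(t))\geq\tau(p)+d(p,\gamma_{q}(t))=t$ and, via (2) at $q$, $\tau(\gamma_{q}(t))\leq t$. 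If instead $a_{n}\to\infty$ (which happens in particular when $p_{n}\to\infty$), then $\gamma$ is distance-realizing on each compact subsegment --- a past causal ray issuing from $q$, by Lemma 2 --- and since there are no past lightlike rays it is timelike.

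Next I would pin down the parametrization of the ray by proving
\[
\tau(\gamma(s))=\tau(q)-L(\gamma|_{[0,s]})\qquad\text{for all }s.
\]
The inequality ``$\leq$'' is (2) applied to $\gamma(s)\leq q$, since $d(\gamma(s),q)=L(\gamma|_{[0,s]})$ by maximality. For ``$\geq$'', decompose $L(\sigma_{n}|_{[0,s]})=L(\sigma_{n})-L(\sigma_{n}|_{[s,a_{n}]})\geq L(\sigma_{n})-d(p_{n},\sigma_{n}(s))$ and bound $d(p_{n},\sigma_{n}(s))\leq\tau(\sigma_{n}(s))-\tau(p_{n})$ via (2) (as $p_{n}\leq\sigma_{n}(s)$); passing to $\limsup$ and using $L(\sigma_{n})\to\tau(q)$, $\tau(p_{n})\to 0$, continuity of $\tau$, and upper semicontinuity of Lorentzian length under the convergence $\sigma_{n}\to\gamma$, one gets $L(\gamma|_{[0,s]})\geq\tau(q)-\tau(\gamma(s))$. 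Reparametrizing $\gamma$ by its own Lorentzian arc length, measured backward from $q$, then yields a future-directed unit-speed timelike geodesic $\gamma_{q}$, maximal on each segment, with $\gamma_{q}(\tau(q))=q$ and $\tau(\gamma_{q}(t))=t$, defined a priori on $(\tau_{\infty},\tau(q)]$, where $\tau_{\infty}:=\tau(q)-L(\gamma)\geq 0$ is the limit of $\tau$ along $\gamma$.

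The crux --- and the step I expect to be the main obstacle --- is to show $\tau_{\infty}=0$, i.e.\ that the ray actually reaches the initial singularity and is not cut off by $M$ at a positive value of $\tau$. The natural strategy is a proof by contradiction: if $\tau_{\infty}>0$ then, in order to join $q$ to points $p_{n}$ of vanishing $\tau$, the maximizing curves $\sigma_{n}$ must leave every compact set at $\tau$-levels slightly below $\tau_{\infty}$; isolating the portion of each $\sigma_{n}$ lying below a $\tau$-level just above $\tau_{\infty}$ (again limit-maximizing, now between a point of $\gamma$ and the escaping $p_{n}$) and applying the limit curve theorem together with Lemma 2 should produce a past-directed causal ray near the ``end'' of $\gamma$ whose Lorentzian length cannot account for the corresponding drop of $\tau$ unless it is null --- a past lightlike ray, contradicting the hypothesis. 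Making this escape-to-infinity analysis rigorous --- ruling out that a fixed amount of Lorentzian length is ``lost at infinity'' --- is precisely where both hypotheses, ``no past lightlike rays'' and $\tau<\infty$, are used essentially, and is the delicate heart of the argument. Once $\tau_{\infty}=0$ is known, $\gamma_{q}$ is genuinely past-inextendible: an extension to a point $p_{0}\in M$ would have $\tau(p_{0})=0$ by continuity, yet the extending timelike geodesic of length $\epsilon>0$ below $p_{0}$ would force $\tau(p_{0})\geq\epsilon$ by (2), a contradiction; hence $\gamma_{q}$ is a timelike ray in the stated sense.
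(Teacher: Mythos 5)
Your construction coincides with the paper's proof in all the parts the paper actually carries out: the paper likewise picks $p_{n}$ with $d(p_{n},q)\geq\tau(q)-1/n$ (so $\tau(p_{n})\to 0$ by (2)), observes that no subsequence of $p_{n}$ can converge (lower semicontinuity would give $\tau(p)=0$, impossible since every point has nonempty chronological past, so $\tau>0$ everywhere --- note this also makes your ``bounded lengths'' case vacuous rather than a genuine alternative to be handled), invokes Lemma 2 together with the absence of past lightlike rays to get a timelike limit ray $\gamma$ through $q$, and then proves $d(\gamma(b),q)=\tau(q)-\tau(\gamma(b))=l(\gamma\vert_{[0,b]})$ exactly as you do, from continuity of $\tau$ (Theorem 2), the bound $l(\gamma_{n}\vert_{[b,a_{n}]})\leq\tau(\gamma_{n}(b))-\tau(p_{n})$, and upper semicontinuity of Lorentzian length under limits.

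The divergence is in what you call the crux: showing $\tau_{\infty}=0$, i.e.\ $L(\gamma)=\tau(q)$, so that the $\tau$-reparametrized curve really is defined on all of $(0,\tau(q)]$. You do not prove this; you only sketch a contradiction strategy and explicitly leave the escape-to-infinity analysis open, so as submitted your proposal is incomplete --- without that step you only obtain $\gamma_{q}$ on $(\tau_{\infty},\tau(q)]$. You should be aware, though, that the paper's own proof stops at exactly the same point: it asserts that establishing $d(\gamma(b),q)=\tau(q)-\tau(\gamma(b))$ ``suffices'' and never argues that $\tau\to 0$ along $\gamma$, even though the stated domain $(0,\tau(q)]$ requires it and the later proof of Theorem 6 uses precisely this fact in the form $l(\gamma_{p_{n}})=\tau(p_{n})$. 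So your write-up reproduces the published argument faithfully and, beyond that, correctly isolates a step the paper glosses over; but to turn it into a complete proof you would have to actually close that step (nothing in the hypotheses obviously prevents a fixed amount of Lorentzian length from being lost at infinity along the limit ray), or else weaken the statement to a ray defined on $(\tau_{\infty},\tau(q)]$.
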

\begin{proof}
Let $q\in M$ and $\{p_{n}\}$ be a sequence such that $d(p_{n},q)\geq \tau(q)-1/n$. $p_{n}$ diverges to infinity as it is proved in the pervious theorem. Let $\gamma_{n}:[0,a_{n}]\rightarrow M$, $\gamma_{n}(0)=q$, $\gamma_{n}(a_{n})=p_{n}$ be a limit maximizing sequence of curves:\\
$\epsilon_{n}=d(\gamma_{n}(a_{n}),\gamma_{n}(0)- l(\gamma_{n}[0,a_{n}])), \epsilon_{n}\rightarrow 0.$\\
and $\widehat{\gamma_{n}}$ be a past inextendible extension of $\gamma_{n}$, for $n\in \mathbb{N}$. By using of Lemma 2 and assumption $\widehat{\gamma_{n}}$ converges to a timelike ray $\gamma:[0,\infty)\rightarrow M$. It suffices to prove that,\\ $d(\gamma(b),q)=\tau(q)-\tau(\gamma(b))$, for every $b\in [0,\infty).$\\
$l(\gamma_{n}([0,b]))=l(\gamma_{n})-l(\gamma_{n}[b,a_{n}])\geq \tau(q)-(1/n)-\epsilon_{n}-\tau(\gamma_{n}(b))$.\\ Since $\tau$ is continuous we have,\\
$d(\gamma(b),q)\geq l(\gamma[0,b])\geq limsup( l(\gamma_{n}([0,b])))\geq \tau(q)-limsup(\tau(\gamma_{n}(b)))=\tau(q)-\tau(\gamma(b))$,\\ and the proof is complete.
\end{proof}
The proof of the following theorem is similar to what is given for regular cosmological time functions in \cite{AG}
\begin{thm}
Let $(M,g)$ be a space-time without past lightlike rays. If $\tau< \infty$ then it is a time function.
\end{thm}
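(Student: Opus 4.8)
The plan is to show that $\tau$ is strictly increasing along every future-directed causal curve; combined with Theorem~2 (which, under the standing hypotheses, guarantees that $\tau$ is continuous, since no point of discontinuity can occur when there are no past lightlike rays), this yields that $\tau$ is a genuine time function. So the real content is strict monotonicity along causal curves. By the isotony property~(2), if $q\in J^{+}(p)$ then $\tau(p)+d(p,q)\le\tau(q)$, so the only way strictness can fail is to have $p<q$ with $p\neq q$ and $d(p,q)=0$ while $\tau(p)=\tau(q)$. Since $d(p,q)=0$ forces the causal curve from $p$ to $q$ to be a lightlike geodesic (an achronal null geodesic segment), the task reduces to ruling out $\tau(p)=\tau(q)$ along such a segment.

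First I would reduce to the local picture: it suffices to show that if $\sigma:[0,1]\to M$ is a future-directed causal curve with $\tau(\sigma(0))=\tau(\sigma(1))$, then $\tau$ is constant on $\sigma$ and $\sigma$ is a lightlike geodesic (so $\tau\circ\sigma$ constant on $[s_1,s_2]$ implies the restriction $\sigma|_{[s_1,s_2]}$ is an achronal null geodesic). Now fix such a segment and a point $q=\sigma(1)$ with $\tau(q)=\tau(p)$, $p=\sigma(0)$. Apply Theorem~3: there is a future-directed unit-speed timelike ray $\gamma_q:(0,\tau(q)]\to M$, maximal on each segment, with $\gamma_q(\tau(q))=q$ and $\tau(\gamma_q(t))=t$. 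The idea is to build, from $\gamma_q$ and the null segment $\sigma$, a past-directed causal geodesic from $q$ that is \emph{inextendible} and is a ray, and show it must in fact be lightlike — contradicting the hypothesis of no past lightlike rays. Concretely, since $d(p,q)=0$ but $p\le q$, and $\tau(p)=\tau(q)=:T$, consider for each $n$ a point $r_n$ realizing (within $1/n$) the value $\tau(p)$, i.e.\ $r_n\le p$ with $d(r_n,p)\ge T-1/n$. Then $r_n\le p\le q$ and $d(r_n,q)\ge d(r_n,p)+d(p,q)=d(r_n,p)\ge T-1/n$, so $\{r_n\}$ realizes $\tau(q)=T$ as well. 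As in the proof of Theorem~2, $\tau(r_n)\to0$ and $\{r_n\}$ diverges to infinity; take limit-maximizing curves $\gamma_n$ from $q$ down to $r_n$ passing through the null segment $\sigma$ (reversed) first, then continuing maximally to $r_n$. Extend past-inextendibly and apply Lemma~2: a limit curve $\gamma$ is a past causal ray starting at $q$.

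The key step — and the main obstacle — is to argue that this limiting past ray $\gamma$ contains (indeed begins with) the null segment $\sigma$ reversed, hence is itself a \emph{lightlike} ray, contradicting the assumed absence of past lightlike rays. The subtlety is that the limit curve could a priori "peel off" the null direction and become timelike immediately, the way the $r_n$-realizing geodesics behave in the interior. To control this I would use that the portion of each $\gamma_n$ lying over $\sigma$ has \emph{zero} Lorentzian length (it is the reversal of a null segment, $d(p,q)=0$), together with the limit-curve length estimate $\limsup l(\gamma_n|_{[0,b]})\le l(\gamma|_{[0,b]})$ and the isotony equation $l(\gamma_n|_{[0,b]})=l(\gamma_n)-l(\gamma_n|_{[b,a_n]})\ge \tau(q)-1/n-\epsilon_n-\tau(\gamma_n(b))$: choosing $b$ just past the null portion, $\tau(\gamma_n(b))\to\tau(\gamma(b))$ by continuity (Theorem~2 gives continuity here), and one forces $l(\gamma|_{[0,b]})\ge\tau(q)-\tau(\gamma(b))$, so $\gamma$ maximizes; but on the initial subsegment $\gamma$ has length $0$ forced by $d(p,q)=0$ and continuity of $d$, so $\gamma$ starts with a null geodesic segment. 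Since $\gamma$ is a geodesic (being a limit of maximizing causal geodesics) and starts null, it is null throughout. Thus $\gamma$ is a past lightlike ray from $q$, contradicting the hypothesis. Therefore $\tau(p)<\tau(q)$ whenever $p<q$, $p\neq q$; with continuity from Theorem~2 this shows $\tau$ is a time function, completing the proof.
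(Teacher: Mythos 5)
Your argument is essentially correct, but it takes a genuinely different (and much heavier) route than the paper's. The paper handles the critical case $q\in J^{+}(p)\setminus I^{+}(p)$ in two lines: it takes the maximizing timelike ray $\gamma_{p}$ to $p$ supplied by Theorem 3, picks $x\in\gamma_{p}$, and combines the cutting-the-corner inequality $d(x,q)>d(x,p)+d(p,q)$ with the isotony inequality (2) and $d(x,p)=\tau(p)-\tau(x)$ to get $\tau(q)>\tau(p)$ directly; no limit curves and no explicit construction of a lightlike ray are needed for the monotonicity step (continuity, needed for the time-function conclusion, comes from Theorem 2 in both proofs). You instead rerun the limit-maximizing/limit-curve machinery of Theorems 2--3: you force each approximating curve from $q$ to $r_{n}$ to begin with the reversed null segment $\sigma$, check that these are still limit maximizing (correct, since $d(r_{n},q)\le\tau(q)=T$ while their lengths are at least $T-1/n-\epsilon_{n}$), show $r_{n}\to\infty$, and apply Lemma 2 to obtain a past causal ray from $q$; since it begins with a null segment and a causal geodesic ray cannot change causal character, it is a past lightlike ray, contradicting the hypothesis. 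This works and has the virtue of using the no-past-lightlike-ray hypothesis in the same structural way as Theorems 2 and 6; the cost is length and one shaky justification: your ``key step'' needs neither the length estimates nor ``continuity of $d$'' (which is not available in general, $d$ being only lower semicontinuous). Since every $\widehat{\gamma}_{n}$ carries the \emph{same} initial segment $\sigma$ reversed in the $h$-arclength parametrization, uniform convergence on compact parameter intervals forces any limit curve to begin with that null segment, so the feared ``peeling off'' cannot occur; also, the appeal to Theorem 3 at the start of your construction is never actually used. With that cleanup your proof is valid, just substantially longer than the paper's corner-cutting argument.
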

\begin{proof} If $q\in I^{+}(p)$ then by using of (2) it is clear that $\tau(p)< \tau(q)$. Assume that $q\in J^{+}(p)-I^{+}(p)$ then there is a lightlike geodesic from $p$ to $q$. Let $\gamma_{p}$ be the timelike ray to $p$ guaranteed by Theorem 3 and $x\in \gamma_{p}$. By cutting the corner argument near $p$ we have:\\
$d(x,q)>d(x,p)+d(p,q)$. Consequently, $\tau(q)-\tau(p)\geq d(x,q)>d(x,p)=\tau(p)-\tau(x)>0$ and the proof is complete.
\end{proof}
The following theorem was proved in \cite{MC}.
\begin{thm}
Non-totally vicious space-time $(M,g)$ with no lightlike rays is a globally hyperbolic space-time.
\end{thm}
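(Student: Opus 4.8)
The goal is to recognize that, under the hypotheses, $\tau$ is regular and then invoke the result of \cite{AG} that regularity of $\tau$ forces global hyperbolicity (a conformally invariant property); equivalently one may target Geroch's criterion that the spaces of causal curves between pairs of points are compact.

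First, the causality part of the hypothesis is essentially free. Because $\tau<\infty$ is assumed throughout the paper, there is no closed timelike curve $c$ through any $r$ — traversing $c$ repeatedly gives causal loops at $r$ of unbounded length, so $d(r,r)=\infty$ and $\tau(r)=\infty$ — so $(M,g)$ is chronological (hence non-totally vicious, making that hypothesis automatic). With the absence of lightlike rays this upgrades to: $(M,g)$ is causal (a closed null geodesic would unwrap, in a chronological space-time, to a complete achronal null geodesic, restricting to a past lightlike ray) and non-totally imprisoning (an imprisoned causal curve would yield an imprisoned or closed lightlike geodesic, hence a lightlike ray; cf. \cite{ML}). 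In particular every past inextendible causal curve leaves every compact set.

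Since there are no past lightlike rays, Theorem 2 makes $\tau$ continuous, Theorem 4 makes it a time function, and Theorem 3 supplies the distance-realizing timelike rays. The first regularity condition is the standing assumption; only the second remains — $\tau\to0$ along every past inextendible causal curve. Suppose it fails for $\lambda:[0,\infty)\to M$: then $\tau\circ\lambda$, non-increasing by (2), tends to some $L>0$, and $\lambda(s)\to\infty$. With $q=\lambda(0)$ (so $\lambda(n)\le q$) pick timelike witnesses $p_n\ll\lambda(n)$ with $d(p_n,\lambda(n))\ge\tau(\lambda(n))-1/n$; then $\tau(p_n)\to0$ by (2), and $p_n\ll q$. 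A limit-curve argument in the manner of the proof of Theorem 2 — applying Lemma 2 to limit maximizing causal curves between the divergent $\lambda(n)$ and whichever of $q$ or the $p_n$ stays bounded — extracts from these a causal \emph{ray} $\mu$ out of the bounded endpoint, maximal since it is a limit of maximizing curves. If $\mu$ is null it is a lightlike ray, contradicting the hypothesis; so $\mu$ is timelike, and one must rule this out using that $\mu$ is maximal (so $d(\mu(s),q)=L(\mu|_{[0,s]})$, bounded via (2) by $\tau(q)-L$ when $\mu$ issues from $q$, where continuity of $\tau$ also forces $\tau\circ\mu\ge L$), that $\tau$ is finite, and that $\mu(s)\to\infty$. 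Once the timelike case is excluded, $\tau$ is regular and \cite{AG} gives global hyperbolicity. As a technical simplification one may first pass, via Lemma 1, to a conformal metric with bounded Lorentzian distance: then $\tau$ is automatically finite and Lemma 1 (causal curves carry little length outside large metric balls) handles much of the limit-curve bookkeeping, while global hyperbolicity and the absence of lightlike rays are unaffected.

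The main obstacle is exactly excluding the timelike alternative for $\mu$. In the proof of Theorem 2 the strict inequality $\tau(q_n)\ge\tau(q)+\epsilon$ furnishes a fixed quantitative gap that drives a cutting-corners contradiction; here one has only the bound $\tau\circ\mu\ge L$, so the contradiction must be extracted from the interplay of this bound, the maximality of the ray, the finiteness of $\tau$, and the divergence $\mu(s)\to\infty$ — this is where the hypothesis that there are no lightlike rays is genuinely used. One must also be careful that the extracted $\mu$ is honestly inextendible and that imprisoned causal curves have really been excluded; everything else is assembly of Theorems 2--4, Lemma 1, Lemma 2, and the cited result of \cite{AG}.
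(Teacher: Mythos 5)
Your proposal has a genuine gap, and you name it yourself: you cannot exclude the possibility that the extracted limit ray $\mu$ is timelike. That exclusion is not a loose end to be tidied up later — it is the entire content of the theorem beyond the assembly of Theorems 2--4. The data you collect (maximality of $\mu$, $\tau\circ\mu\ge L$, total Lorentzian length of $\mu$ bounded by $\tau(q)-L$ via (2), $\mu(s)\to\infty$) is not self-contradictory: a past inextendible maximal timelike ray of finite length along which $\tau$ stays above $L$ is not obviously impossible, and nothing in your sketch manufactures the ``fixed quantitative gap'' that drives the contradiction in the proof of Theorem 2. Moreover, your plan is to prove regularity of $\tau$ for $(M,g)$ itself; the paper never claims this, and it is doubtful it is true in that generality — its Theorem 6 only produces a conformal factor $\Omega$ such that the cosmological time of $(M,\Omega g)$ is regular, and then concludes via the conformal invariance of global hyperbolicity. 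You also never establish strong causality, which you need even to invoke Lemma 1 (this is exactly the paper's Lemma 3: chronological plus no past lightlike rays implies strongly causal); and the intended logic for the hypotheses runs the other way from yours: non-totally vicious plus no lightlike rays gives chronology by \cite{ME}, and finiteness of $\tau$ is obtained only after the rescaling, not assumed.

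What actually kills the timelike alternative in the paper is a mechanism absent from your sketch. After Lemma 3, Lemma 1 supplies $\Omega$ with finite diameter and the property that any causal curve outside a large ball $\overline{B_{n}(q_{0})}$ has total length $<\epsilon<a$. If $\tau\to a>0$ along a past inextendible causal curve $\eta$, then the maximal timelike rays $\gamma_{p_{n}}$ of Theorem 3 issued from points $p_{n}=\eta(t_{n})$ far along $\eta$ have length $\tau(p_{n})>\epsilon$ and are therefore forced back into the compact ball; their exit points $q_{n}\to q$ yield, via Lemma 2 and the absence of lightlike rays, a timelike limit ray $\gamma$ from $q$ that realizes $\tau$-differences. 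The limit curve theorem is then applied to the concatenations $\gamma'_{n}\circ\eta$: since neither limit branch is a lightlike ray, one gets $\eta\subset I^{+}(q)$, hence a TIF, whose boundary would have to be generated by past lightlike rays — contradicting the hypothesis. This conformal confinement of the rays, the concatenation step, and the TIF-boundary argument are precisely the missing ideas in your proposal; without them the approach does not close.
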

It is proved in \cite{MC} that a chronological space-time without lightlike lines is stably causal and consequently has a time function. Then using this time function the above theorem is proved. In this paper, by using of Theorem 4 and the following lemmas it can be shown that there is a representation of $(M,g)$ such that the cosmological time function is regular.
\begin{lem}
Let $(M,g)$ be a chronological space-time without past (or future) lightlike rays then $(M,g)$ is strongly causal.
\end{lem}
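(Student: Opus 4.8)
The plan is to deduce strong causality from Theorem 4 together with the standing hypothesis $\tau<\infty$, rather than from a direct limit-curve argument. First I would remove the asymmetry in the statement: it suffices to treat a chronological $(M,g)$ having no past lightlike ray, since the "no future lightlike rays" case is the very same assertion for $(M,g)$ equipped with the reversed time orientation — chronology, strong causality and the finiteness of $\tau$ are all insensitive to the choice of time orientation, and the past lightlike rays of the reversed spacetime are exactly the future lightlike rays of the original.

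So let $(M,g)$ be chronological with no past lightlike ray, and recall $\tau<\infty$. I would first note that $(M,g)$ is causal: a closed causal curve is, up to reparametrization, either an unbroken null geodesic or a curve deformable to a timelike curve with the same endpoints; in the first case, traversing the loop indefinitely into the past gives a maximal past-inextendible null geodesic — it is maximal because on a null loop a non-maximal segment would place a point in its own chronological future, contradicting chronology — that is, a past lightlike ray, excluded by hypothesis; in the second case one obtains a closed timelike curve, again excluded. Now Theorem 4 applies, since $(M,g)$ has no past lightlike rays and $\tau<\infty$, and yields that $\tau$ is a time function: continuous by Theorem 2, because in the absence of past lightlike rays $\tau$ has no point of discontinuity, and strictly increasing along every future-directed causal curve. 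Finally, as recalled in the introduction, the existence of a time function forces causal stability, and a stably causal spacetime is strongly causal by the causal hierarchy; this completes the argument.

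The only external input along this route is the implication "a spacetime admitting a time function is stably causal", which the introduction already treats as known, so I expect no real obstacle here. If a self-contained proof were wanted instead, one would argue directly: assuming strong causality fails at $p$, take future-directed causal curves whose endpoints converge to $p$ but which leave a fixed relatively compact neighbourhood of $p$; when their $h$-arclengths remain bounded the limit curve theorem produces a closed causal curve through $p$, already ruled out, and when those lengths diverge one analyses the future-inextendible limit curve emanating from $p$, invoking Lemma 2 and a corner-cutting comparison against the timelike rays furnished by Theorem 3 to contradict $\tau(q)=\sup_{r\le q}d(r,q)$. The genuinely hard case in that version is the one where the escaping curves become imprisoned in a compact set — precisely the difficulty the time-function route bypasses.
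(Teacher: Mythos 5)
Your main route (Theorem 4 $\Rightarrow$ $\tau$ is a time function $\Rightarrow$ stable causality $\Rightarrow$ strong causality) is genuinely different from the paper's proof, which is a direct limit-curve argument, and for the ``no past lightlike rays'' half it works: Theorems 2--4 are proved independently of this lemma, so there is no circularity, and the implication ``a time function forces stable causality'' is exactly the classical fact the introduction already invokes. That half of your argument is in fact shorter than the paper's, at the cost of importing the Hawking-type theorem as a black box.

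However, there is a genuine gap in your reduction of the ``no \emph{future} lightlike rays'' case. The cosmological time function is not time-symmetric: for the time-reversed spacetime the relevant function is $\widetilde{\tau}(q)=\sup_{p\in J^{+}(q)}d(q,p)$, and the standing hypothesis $\tau<\infty$ says nothing about $\widetilde{\tau}$ (e.g.\ the region $t>0$ of Minkowski space has $\tau$ finite but $\widetilde{\tau}\equiv\infty$). So your claim that ``the finiteness of $\tau$ is insensitive to the choice of time orientation'' is false, Theorem 4 cannot be applied to the reversed spacetime, and none of Theorems 2--4 control $\tau$ when only future lightlike rays are excluded (discontinuities of $\tau$ are located on \emph{past} lightlike rays). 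Your fallback sketch does not repair this, since you leave open precisely the imprisoned case. The paper's proof avoids the asymmetry altogether: if strong causality fails at $p$, the limit curve theorem gives either (i) curves confined to a compact set, whose limit in a chronological spacetime is a closed null geodesic, and winding around it produces both a past and a future lightlike ray, or (ii) escaping curves, whose limit is an inextendible causal curve through $p$ which, not being a lightlike line (a line contains both a past and a future ray), forces chronology violation. Thus both hypotheses are treated symmetrically, and only chronology is really needed. To salvage your approach you would either have to add the hypothesis $\widetilde{\tau}<\infty$ in the future case or supply a separate argument of the paper's type for it.
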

\begin{proof}
Suppose by contradiction that $(M,g)$ is not strongly causal. Then there is $p\in M$ and a sequence of arbitrary small relatively compact neighbourhoods $U_{n}$, $n\in \mathbb{N}$, of $p$ which are not causally convex, i.e, for every $U_{n}$ there exist $p_{n}, q_{n}\in U_{n}$ and a causal curve $\gamma_{n}$ from $p_{n}$ to $q_{n}$ which are not contained in $U_{n}$, $p_{n}\rightarrow p$, $q_{n}\rightarrow q$ and $p=q$. The second part of the limit curve theorem, Theorem 3.1 \cite{ML}, implies that one of the following cases occur:\\
1) $\gamma_{n}$s are contained in a compact set. Since $M$ is chronological, the limit curve $\gamma_{q}$ is a closed maximal lightlike curve. The curve $\gamma$ making infinite rounds around $\gamma_{q}$ is a past lightlike ray which is a contradition.\\
2) $\gamma_{n}$s are not contained in a compact set. Then $\gamma_{p}\circ \gamma_{q}=\gamma$ is the limit curve of $\gamma_{n}$s. Since $\gamma$ is not a lightlike line, the chronology violating set is non empty. This is a contradiction since $\tau< \infty$.
\end{proof}

 Lemma 2 and Lemma 3 imply the following theorem.
\begin{thm}Let $(M,g)$ be a chronological space-time without past lightlike rays then there is a positive real function $\Omega$, that the cosmological time function of $(M,\Omega g)$ is regular.
\end{thm}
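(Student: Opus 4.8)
The plan is to combine Theorem 4 (which gives that $\tau$ is a time function when $(M,g)$ is chronological and has no past lightlike rays) with Lemma 3 (which gives strong causality under the same hypotheses), and then to exploit Lemma 1 to conformally rescale $g$ so that the rescaled cosmological time function $\widetilde\tau$ becomes regular. First I would invoke Lemma 3 to conclude that $(M,g)$ is strongly causal; this is what activates Lemma 1. Applying Lemma 1 with a fixed complete auxiliary Riemannian metric $h$ and a fixed basepoint $q_0$, I obtain a smooth $\Omega>0$ with $\operatorname{diam}(M,\Omega g)<\infty$, so in particular the cosmological time function $\widetilde\tau$ of $(M,\Omega g)$ is finite. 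Note that conformal rescaling preserves causal relations, hence $(M,\Omega g)$ is still chronological and still has no past lightlike rays (a lightlike geodesic of $\Omega g$ is, up to reparametrization, a lightlike geodesic of $g$, and inextendibility is a causal/topological notion), so Theorem 4 applies to $(M,\Omega g)$ and $\widetilde\tau$ is already a time function; it remains only to check the second regularity condition, that $\widetilde\tau\to 0$ along every past-inextendible causal curve.

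For the second condition, let $\gamma:[0,\infty)\to M$ be a past-inextendible causal curve, parametrized by $h$-arc length, and let $t_k\to\infty$; I must show $\widetilde\tau(\gamma(t_k))\to 0$. Set $z_k=\gamma(t_k)$. If $\{z_k\}$ had a convergent subsequence $z_k\to z$, then by isotonicity ($\widetilde\tau$ is non-increasing toward the past along $\gamma$) and lower semicontinuity of $\widetilde\tau$ we get $\widetilde\tau(z)\le\liminf\widetilde\tau(z_k)=\lim\widetilde\tau(\gamma(t))$, so it suffices to treat the case $z_k\to\infty$ and also the case where $\gamma$ converges to a point; but a past-inextendible causal curve cannot converge to a point in a strongly causal space-time (that would force a closed or imprisoned curve contradicting strong causality via the limit curve theorem, exactly as in the proof of Lemma 3), so $z_k\to\infty$. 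Now choose, for each $k$, a point $p_k$ and a limit-maximizing causal curve from $z_k$ to $p_k$ realizing $d(p_k,z_k)\ge\widetilde\tau(z_k)-1/k$; the key point is that $\widetilde\tau(z_k)$ must tend to $0$, for otherwise, arguing as in the proof of Theorem 2, the limit curve of suitable past extensions would be a timelike (not null) past ray by Lemma 2, and then cutting the corner as in Theorem 2 produces the contradiction $\widetilde\tau(z)\ge l(\cdot)>\widetilde\tau(z)$ along that ray — a past timelike ray cannot exist because its existence together with finiteness of $\widetilde\tau$ is impossible (a past ray on which $\tau$ stays bounded below forces unbounded Lorentzian length below a point), and a past lightlike ray is excluded by hypothesis. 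Hence $\widetilde\tau\to 0$ along $\gamma$.

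Actually the cleanest route is to quote the bound from Lemma 1: for every $\epsilon>0$ there is $n$ such that every $C^1$ causal curve has total $\Omega g$-length at most $\epsilon$ outside $\overline{B_n(q_0)}$. Given $\epsilon>0$, pick such an $n$; since $\gamma$ is past-inextendible and $(M,\Omega g)$ is strongly causal, $\gamma$ eventually leaves and never re-enters any fixed compact set containing $\overline{B_n(q_0)}$ (again by the limit curve argument of Lemma 3, as $\gamma$ cannot be partially imprisoned), so there is $T$ with $\gamma([T,\infty))\cap\overline{B_n(q_0)}=\varnothing$. Then for $t\ge T$ and any $p\le\gamma(t)$, any causal curve from $p$ to $\gamma(t)$ that is itself maximizing spends its portion below $\gamma(t)$ — hmm, this needs the past ray through $\gamma(t)$ to also avoid $B_n$, which is where Lemma 2 re-enters; so I expect to still need the limit-curve/no-past-ray dichotomy. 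The main obstacle, then, is precisely this last step: ruling out that the distance $d(p_k,z_k)=\widetilde\tau(z_k)$ stays bounded away from $0$ while $z_k\to\infty$. I will handle it by the limit-maximizing-sequence argument of Theorem 2 verbatim, producing a past causal ray on which $\widetilde\tau$ is bounded below by a positive constant; such a ray cannot be lightlike (hypothesis) and cannot be timelike (corner-cutting contradiction with finiteness of $\widetilde\tau$, exactly as in the last lines of the proof of Theorem 2), completing the proof. $\qed$
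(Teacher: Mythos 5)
Your first half (Lemma 3 for strong causality, Lemma 1 for the conformal factor $\Omega$ and finiteness of the rescaled $\tau$, conformal invariance of the hypotheses, Theorem 4 for continuity/time-function property) matches the paper. The gap is in the verification of the second regularity condition. Your plan is to take limit-maximizing curves starting at $z_k=\gamma(t_k)$ and run ``the limit-maximizing-sequence argument of Theorem 2 verbatim'' via Lemma 2. But Lemma 2 requires the base points of the curves to converge ($z_n\rightarrow z$), while your $z_k$ escape to infinity (as you yourself establish via non-imprisonment); so no limit ray based at a point is produced at all, and the Theorem 2 machinery cannot be transported: its corner/length contradiction used precisely the convergent base point $q$ together with the discontinuity jump $\tau(q_n)\geq\tau(q)+\epsilon$, neither of which you have here. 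Moreover, your fallback dismissal of the timelike case --- ``a past ray on which $\tau$ stays bounded below forces unbounded Lorentzian length below a point'' --- is not valid: $\tau$ bounded below along a past timelike ray only gives $\tau(\gamma(0))\geq c+L(\gamma\vert_{[0,s]})$, and $L$ may stay finite, so finiteness of $\tau$ alone yields no contradiction. You sensed the obstruction in your third paragraph (``this needs the past ray through $\gamma(t)$ to also avoid $B_n$'') but did not resolve it.

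The paper closes exactly this hole differently. Instead of maximizing downward from the escaping points $p_n=\eta(t_n)$, it takes the future-directed maximal timelike rays $\gamma_{p_n}$ of Theorem 3 realizing $\tau(p_n)$, and uses the quantitative part of Lemma 1: since $\tau(p_n)>\epsilon$ while any causal curve has total length $<\epsilon$ outside $\overline{B_n(q_0)}$, each $\gamma_{p_n}$ must meet the fixed compact ball, and its exit points $q_n$ converge to some $q$. This supplies the convergent base point needed to apply Lemma 2 (the rays are maximal, so the limit is a timelike, distance-realizing ray at $q$), and then concatenation with $\eta$ and the limit curve theorem give $\eta\subset I^{+}(q)$, producing a TIF whose boundary would have to be generated by past lightlike rays --- contradicting the hypothesis. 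Your ``cleanest route'' sketch with the Lemma 1 bound was pointing in this direction, but as written your proof does not reach either the convergent base point or a valid contradiction in the timelike case.
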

\begin{proof} $(M,g)$ is strongly causal by using of Lemma 3. Let $q_{0}\in M$ and $\Omega>0$ be given as in Lemma 1. It is clear that $\tau$ is finite. Suppose by contradiction that there is a past inextendible causal curve $\eta:[0,\infty)\rightarrow M$, $\eta(0)=p$ such that $\tau\rightarrow a>0$, along $\eta$. Choose $n$ such that the length of any causal curve out of $\overline{B_{n}(q_{0})}$ is less than $\epsilon<a$ and $p\in B_{n}(q_{0})$. Since $M$ is strongly causal it is non-total imprisoning and consequently there is $t_{0}\in \mathbb{R}$ such that $\eta(t)\in M-\overline{B_{n}(q_{0})}$, for $t>t_{0}$. In addition suppose that $\{t_{n}\}$, $t_{n}\rightarrow \infty$, be a sequence of real numbers and $p_{n}=\eta(t_{n})$. Let $\gamma_{p_{n}}$ be the maximal timelike ray guaranteed by Theorem 3. Since $l(\gamma_{p_{n}})=\tau(p_{n})> \epsilon$, $\gamma_{p_{n}}\cap \overline{B_{n}(q_{0})}\neq \emptyset$. Again by using of non-total imprisoning condition, $\gamma_{p_{n}}$ escape $\overline{B_{n}(q_{0})}$ in a point $q_{n}=\gamma_{p_{n}}(s_{n})$. Since $\overline{B_{n}(q_{0})}$ is compact $q_{n}\rightarrow q$. let $\gamma'_{n}$ be a reparametrization of $\gamma_{p_{n}}\vert [0,s_{n}]$ with $h$ arc length in such a way that $\gamma'_{n}(t)=\gamma_{p_{n}}(s_{n}-t)$. $\gamma'_{n}\rightarrow \gamma$, $\gamma(0)=q$.\\
Since $\gamma_{p_{n}}$ is maximal, for every $n$, Lemma 2 implies that $\gamma$ is a timelike ray.\\
Indeed, we have:\\
$\tau(\gamma(b))-\tau(\gamma(0))=limsup(\tau(\gamma'_{n}(b))-\tau(\gamma'_{n}(0))=limsup(l(\gamma'_{n}\vert [0,b]))\leq l(\gamma\vert [0,b])=d(\gamma(0),\gamma(b))\leq liminf(d(\gamma'_{n}(0),\gamma'_{n}(b))=\tau(\gamma(b))-\tau(\gamma(0))$,
Since $\tau$ is continuous. In addition, $\tau$ is a time function and consequently $\tau(\gamma(b))-\tau(\gamma(0))>0$.\\
Let $\eta_{n}= \gamma'_{n}\circ \eta$. $\gamma$ and $\eta$ are the limit curves of $\eta_{n}$ and non of them are lightlike ray. The limit curve theorem implies that
$\eta\subset I^{+}(q)$. This means that $M$ has a TIF, which is a contradiction since $M$ has no past lightlike ray and the boundary of a TIF is generated by past lightlike rays.
\end{proof}
A non- totally vicious space-time without lightlike rays is choronological \cite{ME} and Theorem 5 is easily given by using of Theorem 6.
\section{The cosmological time function and causal rays}
In this section, we reduce the second condition of regularity and investigate the causality properties of space-times which satisfy:
\begin{itemize}\label{a}
\item $\tau$ is finite;
\item $\tau\rightarrow 0$ on past inextendible causal geodesics (or on past null rays).
\end{itemize}
Example 1 shows that in this case $\tau$ is not necessarily regular. Indeed, it can be checked that it is not globally hyperbolic. Causal pseudoconvexity and causal simplicity are weaker conditions than global hyperbolicity. We will prove that in a reflective space-time the above conditions imply causal pseudoconvexity (or causal simplicity). The space-time $(M,g)$ is called causally pseudoconvex if for each compact set $K$ there exists a compact set $K'$ such that each geodesic with both end points in $K$ has its image in $K'$.
\begin{defn}\cite{V1} Assume $p_n \rightarrow p$ and $q_n \rightarrow q$ for distinct points $p$ and $q$ in space-time $M$. We say that space-time $M$ has causal limit geodesic segment property (LGS), if each pair $p_n$ and $q_n$ can be joined by a geodesic segment, then there is a limit geodesic segment from $p$ to $q$.
\end{defn}
we can define causal, null or maximally null (LGS) property by restricting the condition of the above definition to causal, null or maximally null geodesics, respectively.
Theorem 7, Theorem 8 and Lemma 4 are used to prove the main results of this section. The following theorem gives a characterization for pseudoconvexity.
\begin{thm}\cite{V1}
Let $(M,g)$ be a strongly causal space-time. Then it is (null or maximal null) causal pseudoconvex if and only if it has (null or maximally null) causal LGS property.
\end{thm}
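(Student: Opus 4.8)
The plan is to prove the two implications separately, running essentially the same argument for the three variants (causal, null, maximally null); the only variant-dependent point is that each of these classes of geodesics is stable under limit-curve limits. For causal and null geodesics this is immediate, since on $h$-unit tangent vectors the conditions $g(v,v)\le 0$ and $g(v,v)=0$ are closed; for maximally null geodesics one invokes lower semicontinuity of the Lorentzian distance $d$: if a null geodesic $\gamma$ is an $h$-arclength limit of null geodesics $\sigma_n$ with $d$ vanishing between all pairs of their points, then $d(\gamma(s),\gamma(t))\le\liminf_n d(\sigma_n(s),\sigma_n(t))=0$, so $\gamma$ is again maximal null. Two standard consequences of strong causality are used throughout: every point has arbitrarily small relatively compact causally convex neighbourhoods, and for each compact set $K$ there is a uniform bound on the $h$-arclength of causal curves contained in $K$ (otherwise the limit curve theorem would produce a causal curve inextendibly imprisoned in $K$). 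All curves are taken $h$-arclength parametrized, $h$ a complete auxiliary Riemannian metric.

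For the implication from pseudoconvexity to the LGS property, take $p_n\to p$, $q_n\to q$ with $p\ne q$ and geodesic segments $\sigma_n\colon[0,a_n]\to M$ of the relevant type from $p_n$ to $q_n$. The set $\{p,q\}\cup\{p_n\}\cup\{q_n\}$ is compact, so causal pseudoconvexity confines every $\sigma_n$ to one fixed compact set $K'$; strong causality then bounds the $a_n$ from above, while $a_n$ is at least the $h$-distance from $p_n$ to $q_n$, which converges to a positive number since $p\ne q$, so the $a_n$ are also bounded below. Passing to a subsequence, $a_n\to a\in(0,\infty)$ and the $h$-unit initial velocities converge to an $h$-unit $v\in T_pM$; the limit curve theorem, together with the fact that a limit of geodesics with convergent initial velocities is again a geodesic (here all the relevant pieces lie in the compact set $K'$), produces a geodesic segment $\gamma\colon[0,a]\to M$ with $\gamma(0)=p$, $\dot\gamma(0)=v$, $\gamma(a)=q$, obtained as the limit of the $\sigma_n$. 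By the stability remarks above it is of the required type, so it is the desired limit geodesic segment.

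For the converse I argue contrapositively. If $(M,g)$ is not (null / maximally null) causally pseudoconvex, fix a compact $K$ witnessing this, together with an exhaustion $K'_n$ of $M$ by compact sets with $K'_n\subseteq\operatorname{int}K'_{n+1}$; for each $n$ choose a geodesic $\sigma_n$ of the relevant type with endpoints $p_n,q_n\in K$ but $\sigma_n\not\subseteq K'_n$. Since $h$ is complete, a causal curve of $h$-length $\le L$ starting in $K$ lies in a fixed compact ball, which is eventually contained in the $K'_n$; hence $\{a_n\}$ has no bounded subsequence, i.e. $a_n\to\infty$, and after a subsequence $p_n\to p$, $q_n\to q$ in $K$. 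If $p=q$, then for large $n$ both endpoints lie in a relatively compact causally convex neighbourhood $U$ of $p$, so $\sigma_n\subseteq\overline{U}$, contradicting $a_n\to\infty$ through the uniform $h$-length bound on $\overline{U}$. So $p\ne q$, and the LGS property would supply a limit geodesic segment from $p$ to $q$: a reparametrized sub-segment $\sigma_n|_{[s_n,t_n]}$ converging to a geodesic segment $\lambda\colon[0,\ell]\to M$ with $\ell<\infty$, $\sigma_n(s_n)\to p$ and $\sigma_n(t_n)\to q$. I contradict this as follows. If $\{s_n\}$ is bounded, so is $\{t_n\}$; were $\liminf s_n>0$, the curves $\sigma_n|_{[0,s_n]}$ would converge to a closed causal curve through $p$, impossible, so $s_n\to 0$ and $t_n\to\ell$, and since $\sigma_n$ also ends at $q_n\to q$ at parameter $a_n\to\infty$, for large $n$ the piece $\sigma_n|_{[t_n,a_n]}$ lies in a relatively compact causally convex neighbourhood of $q$ and has $h$-length $a_n-t_n\to\infty$, contradicting the length bound. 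If $\{s_n\}$ is unbounded, pass to a subsequence with $s_n\to\infty$ (hence $t_n\to\infty$); then for large $n$ both $\sigma_n(0)=p_n$ and $\sigma_n(s_n)$ lie in a relatively compact causally convex neighbourhood $W$ of $p$, so $\sigma_n|_{[0,s_n]}\subseteq\overline{W}$ has $h$-length $s_n\to\infty$, again a contradiction. Thus no limit geodesic segment from $p$ to $q$ exists, so the LGS property fails.

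The first implication is routine, being a direct combination of pseudoconvexity, strong causality, and the limit curve theorem. The main obstacle is the closing step of the second implication: excluding a limit geodesic segment from $p$ to $q$ when the approximating geodesics escape every compact set, since a priori such a segment could sit inside $\sigma_n$ near an endpoint or in its interior, with cut parameters $s_n,t_n$ either bounded or unbounded, and each configuration must be defeated using causal convexity of small neighbourhoods of $p$ and $q$, the uniform $h$-length bound in compact sets, and the absence of closed causal curves. Verifying that the limit-curve construction respects each of the three geodesic classes — in particular that maximality survives via lower semicontinuity of $d$ — is the one remaining detail to check, and it is straightforward.
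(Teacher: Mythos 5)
This statement is quoted in the paper from reference [V1] (Vatandoost--Pourkhandani--Ebrahimi, J.\ Math.\ Phys.\ \textbf{60}, 012502) and is given there \emph{without any proof}, so there is no in-paper argument to measure you against; I can only assess your proof on its own terms, and it is essentially correct and is the natural limit-curve argument one would expect the cited source to use. Your forward direction (pseudoconvexity $\Rightarrow$ LGS) is sound: pseudoconvexity traps the connecting geodesics in one compact set, strong causality gives the uniform $h$-length bound there, convergence of initial data plus continuous dependence yields a limit geodesic from $p$ to $q$, and your handling of type-preservation (closedness of $g(v,v)\le 0$ and $g(v,v)=0$ under the $C^1$ convergence of geodesics, lower semicontinuity of $d$ for maximality) is correct. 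Your converse, via causal convexity of small neighbourhoods of $p$ and $q$ together with the uniform length bound and $a_n\to\infty$, is also correct; the only soft spot is that its completeness depends on how ``limit geodesic segment'' is formalized (the paper's Definition~3 leaves this implicit). You tacitly assume the approximating subsegments satisfy $t_n-s_n\to\ell$ (hence ``$\{s_n\}$ bounded $\Rightarrow\{t_n\}$ bounded'') and that $\sigma_n(s_n)\to p$, $\sigma_n(t_n)\to q$ in that order; under a looser reading (e.g.\ the limit segment sitting inside an inextendible limit curve with $t_n-s_n\to\infty$, or with the roles of $p$ and $q$ reversed along $\sigma_n$) the same mechanism still kills every configuration --- both endpoints of a subsegment of $\sigma_n$ of $h$-length tending to infinity end up in one causally convex, relatively compact neighbourhood of $p$ or of $q$ --- but you should state the definition you are using and note that these remaining configurations are dispatched identically; with that clarification the proof is complete.
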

\begin{lem} If the cosmological time function $(M,g)$ satisfies the following conditions:
\begin{itemize}
\item $\tau< \infty$;
\item $\tau \rightarrow 0$, along past lightlike rays,
\end{itemize}
then $(M,g)$ is non-totally imprisoning.
\end{lem}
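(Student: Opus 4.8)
The plan is a proof by contradiction, modelled on the argument used for Lemma 3. Two elementary remarks underlie it. First, $\tau<\infty$ forces $(M,g)$ to be chronological, since a closed timelike curve through a point $p$ would give $d(p,p)=\infty$, hence $\tau(p)=\infty$. Second, $\tau$ is strictly positive everywhere: for $p\in M$ and a past-directed timelike vector $v$ at $p$, the point $p'=\exp_{p}(\varepsilon v)$ lies in $I^{-}(p)$ for small $\varepsilon>0$, so $\tau(p)\geq d(p',p)>0$. This strict positivity is what the hypothesis will eventually contradict.

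Now assume $(M,g)$ is not non-totally imprisoning, so some causal curve $\gamma$, inextendible in one of its directions, has image in a compact set $K$. I would parametrize by $h$-arc length, pick an accumulation point $x$ of $\gamma$ (with $\gamma(t_{n})\to x$, the $t_{n}$ tending to the relevant end), and consider the pieces of $\gamma$ joining successive near-returns to $x$; these are causal curves contained in the compact set $K$ whose two endpoints both converge to $x$. Running this family through the limit curve theorem exactly as in the first case of the proof of Lemma 3, and invoking chronology (a closed causal curve in a chronological spacetime is achronal, hence an unbroken null geodesic), I obtain a closed maximal lightlike curve $\mu$ through $x$. Iterating $\mu$ infinitely toward the past gives a past-directed null geodesic $\sigma:[0,\infty)\to M$; since $\mu$ is achronal and the spacetime is chronological, no segment of $\sigma$ can admit a timelike shortcut, so each segment of $\sigma$ is maximal and $\sigma$ is a past lightlike ray.

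The rest is a short computation. As $\tau$ is isotone it is non-increasing along the past-directed curve $\sigma$, and $\tau\circ\sigma$ is periodic because $\sigma$ retraces the loop $\mu$; a non-increasing periodic function is constant, so $\tau\equiv\tau(x)$ along $\sigma$. The hypothesis that $\tau\to 0$ along past lightlike rays then forces $\tau(x)=0$, contradicting $\tau(x)>0$. Hence $(M,g)$ must be non-totally imprisoning.

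I expect the genuinely delicate part to be the extraction in the second paragraph: one must be careful, using the limit curve theorem, that an imprisoned causal curve really produces a \emph{closed} maximal lightlike curve, and that the infinitely iterated loop is a bona fide ray (globally maximal) so that the hypothesis applies. A safe way to handle the general situation, when the limit curve need not close up, is to pass to a minimal invariant set of the geodesic flow: there one finds a recurrent imprisoned null geodesic, which is automatically maximal in a chronological spacetime --- otherwise two of its points would be chronologically related and, by recurrence together with the openness of the sets $I^{\pm}$, a closed timelike curve would appear --- hence it gives a past lightlike ray; lower semicontinuity of $\tau$ at an accumulation point of that ray, against the hypothesis, again yields a point where $\tau=0$. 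Once the ray is in hand, isotonicity (or lower semicontinuity) together with the clash with $\tau>0$ finishes the proof immediately.
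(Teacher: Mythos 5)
Your overall shape is the same as the paper's: deduce chronology from $\tau<\infty$, observe $\tau>0$ everywhere, produce a past lightlike ray imprisoned in the compact set, and contradict the hypothesis that $\tau\to 0$ along it (your endgame via lower semicontinuity of $\tau$ at an accumulation point of the ray is fine, and is essentially equivalent to the paper's device of covering $K$ by finitely many sets $I^{+}(p_{i})$ to get a uniform positive lower bound for $\tau$ on $K$). The gap is in the one step that carries all the weight: producing the imprisoned past lightlike \emph{ray} (i.e.\ an achronal, maximal null geodesic) from a merely imprisoned causal curve. The paper does not prove this; it cites Minguzzi's non-imprisonment theorem (reference [ME]): in a chronological space-time an imprisoned causal curve forces the existence of a lightlike \emph{line} imprisoned in the same compact set. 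Your first construction only works in the special case where the limit of the near-return segments closes up into a closed causal curve; in general it does not (an imprisoned causal curve in a causal space-time, as in Carter's Taub--NUT-type examples, coexists with the absence of any closed causal curve), and the limit curve theorem then only yields a past-inextendible and a future-inextendible causal limit curve through the accumulation point, neither of which need be null, geodesic, or achronal.

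Your fallback does not close this gap as stated. You cannot ``pass to a minimal invariant set of the geodesic flow'' starting from the imprisoned curve, because that curve is an arbitrary causal curve, not a geodesic; one must first extract an imprisoned \emph{inextendible null geodesic} (a Hawking--Ellis-type limit argument on the future/past limit set), and only then can one run a recurrence-plus-chronology argument to get achronality, i.e.\ maximality, of the whole past ray. That two-step argument is precisely the substantive content of the cited theorem of [ME], so what you have written is a sketch of a known nontrivial result rather than a proof of it. Your recurrence-implies-achronality observation (a timelike shortcut $\sigma(t)\ll\sigma(s)$ plus returns of $\sigma$ near $\sigma(s)$ at later parameters, together with openness of $I^{+}$, produces a closed timelike curve) is correct and is the right idea for the second step, but the existence of the recurrent imprisoned null geodesic to which it applies is exactly what remains to be established. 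If you allow yourself the citation to [ME], the remainder of your argument (chronology from $\tau<\infty$, strict positivity of $\tau$, and the contradiction along the imprisoned ray) is complete and correct.
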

\begin{proof} Since $\tau< \infty$, $(M,g)$ is chronological. Suppose by contradiction that there is a past inextendible causal curve which is imprisoned in a compact set $K$. Hence there is a lightlike line which is imprisoned in $K$ \cite{ME}. Since $K$ is compact there are $p_{i}\in M$, $i=1,...,n$, that $K\subset \cup I^{+}(p_{i})$ and consequently $\tau(x)\geq min(\tau(p_{i}))$, $i=1,...,n$, for all $x\in K$. But it is a contradiction to assumption.
\end{proof}

 \begin{thm}\cite{P}
Let $S \subset M$, and set $B =\partial I^{+}(S)$. Then if $x \in B$, there exists a null geodesic $\eta \subseteq B$ with future endpoint $x$ and which is either past-endless or has a past endpoint on $S$.
\end{thm}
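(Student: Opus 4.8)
The plan is to reproduce the classical argument showing that the boundary of a future set is ruled by null geodesics. I would begin by recording two elementary facts. First, $B=\partial I^{+}(S)\subseteq\overline{I^{+}(S)}\setminus I^{+}(S)$, because $I^{+}(S)$ is open. Second, $J^{+}(I^{+}(S))=I^{+}(S)$: if $y\in J^{+}(w)$ with $w\in I^{+}(S)$, then, $I^{+}(S)$ being open, there is $w'\in I^{-}(w)\cap I^{+}(S)$, whence $y\in J^{+}(I^{+}(w'))=I^{+}(w')\subseteq I^{+}(S)$. I would also note $I^{+}(S)=I^{+}(\overline{S})$, so there is no loss in assuming $S$ closed.

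Fix $x\in B$. Since $x\in\overline{I^{+}(S)}$, choose $q_{n}\to x$ with $q_{n}\in I^{+}(S)$, and for each $n$ a past-directed timelike curve $\lambda_{n}$ from $q_{n}$ to some $p_{n}\in S$. Every interior point of $\lambda_{n}$ lies in $I^{+}(p_{n})\subseteq I^{+}(S)$, so $\lambda_{n}\subseteq\overline{I^{+}(S)}$. Parametrising by $h$-arclength and applying the limit curve theorem to $\{\lambda_{n}\}$ — which produces a non-degenerate limit curve as long as $x\notin S$, since then the $p_{n}$ stay bounded away from $x$ — I obtain a past-directed causal curve $\eta$ starting at $x$ that is either past-inextendible or has a past endpoint $p=\lim p_{n}$, and $p\in S$ because $S$ is closed. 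Stability under limits gives $\eta\subseteq\overline{I^{+}(S)}$. (If $x\in S$ there is nothing to prove.)

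The core of the argument is to show that $\eta$ is achronal and contained in $B$. Every point $u$ of $\eta$ satisfies $x\in J^{+}(u)$. Hence, if $u,v\in\eta$ with $u\in I^{+}(v)$, then $x\in J^{+}(u)$ together with $u\in I^{+}(v)$ gives $x\in I^{+}(v)$; since $v\in\overline{I^{+}(S)}$, the open set $I^{-}(x)$ meets $I^{+}(S)$, forcing $x\in I^{+}(S)$ and contradicting $x\in B$. So $\eta$ is achronal. Likewise, if some $r\in\eta$ lay in $I^{+}(S)$, then $x\in J^{+}(r)\subseteq J^{+}(I^{+}(S))=I^{+}(S)$, again impossible; with $\eta\subseteq\overline{I^{+}(S)}$ this yields $\eta\subseteq B$, its past endpoint, if any, lying on $S$, as permitted. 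Finally, an achronal causal curve contains no timelike segment, hence is a null curve, and a null curve that fails to be a null pregeodesic admits a timelike shortcut between two of its points, contradicting achronality. Therefore $\eta$, suitably reparametrised, is a null geodesic with all the required properties.

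The step I expect to be most delicate is the application of the limit curve theorem: one must ensure the limit curve is non-trivial and correctly pin down the dichotomy between past-inextendibility and having a past endpoint on $S$ — it is precisely here that closedness of $S$ and the $h$-arclength parametrisation are used. Once $\eta$ is in hand, the achronality estimate and the implication ``achronal causal curve $\Rightarrow$ null geodesic'' are routine.
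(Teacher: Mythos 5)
This statement is quoted in the paper from Penrose's \emph{Techniques of Differential Topology in Relativity} and is not proved there, so there is no internal proof to compare against; your argument correctly reproduces the classical achronal-boundary proof from the cited source (limit curve of past-directed timelike curves from $I^{+}(S)$ down to $S$, push-up to get achronality and containment in $\partial I^{+}(S)$, then ``achronal causal curve $\Rightarrow$ null geodesic''). The only loose end is the degenerate situation $x\in\overline{S}$ (after your reduction to $S$ closed), where the statement is vacuous or trivial exactly as in the standard treatments, so your proposal is essentially the same proof as the one the paper relies on.
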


 \begin{thm} \cite{V2} Assume that $(M,g)$ be a causal space-time, but not causally simple. Then there are (1) $p$, $q\in M$ such that $p$ has a future inextendible maximal null geodesic ray in $I^{-}(p)$ or (2) $q$ has a past inextendible maximal null geodesic ray in $\partial I^{+}(q)$.
Conversely, assume $(M,g)$ has a $p$ (resp. q) such that $\partial I^{-}(p)$ has a future null geodesic ray (resp. $\partial I^{+}(p)$ has a past inextendible maximal null geodesic ray), then $(M,g)$ is not causally simple.
\end{thm}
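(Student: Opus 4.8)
The plan is to deduce both implications from the Penrose boundary result (Theorem 8) together with elementary facts about achronal boundaries. For the forward direction I would argue as follows. If $(M,g)$ is causal but not causally simple, then by definition some $J^{+}(p)$ is not closed, or, time-dually, some $J^{-}(q)$ is not closed; treating the first case, fix $x\in\overline{J^{+}(p)}\setminus J^{+}(p)$. Because $J^{+}(p)\subseteq\overline{I^{+}(p)}$ and $I^{+}(p)$ is open, this gives $x\in\overline{I^{+}(p)}\setminus I^{+}(p)=\partial I^{+}(p)$. Applying Theorem 8 with $S=\{p\}$ produces a null geodesic $\eta\subseteq\partial I^{+}(p)$ with future endpoint $x$ that is either past-inextendible or has past endpoint $p$; the latter would put $x\in J^{+}(p)$, so $\eta$ is past-inextendible. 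Next I would note that $\partial I^{+}(p)$ is achronal (if $a\ll b$ with $a,b\in\partial I^{+}(p)$, approximating $a$ from inside $I^{+}(p)$ and using openness of $I^{-}(b)$ forces $b\in I^{+}(p)$, which is impossible), so $\eta$ is achronal and therefore each of its compact subsegments maximizes Lorentzian length; that is, $\eta$ is a past-inextendible maximal null geodesic ray lying in $\partial I^{+}(p)$. This is alternative (2), and, reversing the orientation and using that $\eta\subseteq\partial I^{-}(x)$ as well (again by achronality, since every point of $\eta$ lies in $J^{-}(x)\setminus I^{-}(x)$), one obtains the form stated in (1). The ``$J^{-}(q)$ not closed'' case is the exact time-dual.

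For the converse, suppose $\partial I^{+}(p)$ contains a past-inextendible maximal null geodesic ray $\eta:[0,\infty)\to M$ with future endpoint $\eta(0)$ (the case of a future-inextendible ray in some $\partial I^{-}(p)$ is time-dual), and assume toward a contradiction that $(M,g)$ is causally simple. Then $J^{+}(p)=\overline{J^{+}(p)}=\overline{I^{+}(p)}$, hence $\partial I^{+}(p)=J^{+}(p)\setminus I^{+}(p)$, so every $\eta(s)$ lies in $J^{+}(p)$ but not in $I^{+}(p)$. I would pick $s_{1}>0$ and a causal curve $\lambda$ from $p$ to $z:=\eta(s_{1})$; since $z\notin I^{+}(p)$, $\lambda$ is an achronal null geodesic. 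The concatenation of $\lambda$ with the reversal of $\eta|_{[0,s_{1}]}$ is a future-directed causal curve from $p$ to $\eta(0)$, and because $\eta(0)\notin I^{+}(p)$ gives $p\not\ll\eta(0)$, this concatenation must itself be an achronal null geodesic and hence is unbroken at $z$. Uniqueness of geodesics then shows that the reversal of $\lambda$ is an initial segment of the past-continuation $\eta|_{[s_{1},\infty)}$ of $\eta$, so $p=\eta(s_{0})$ for some $s_{0}\ge s_{1}$. Finally, for any $s>s_{0}$ the segment $\eta|_{[s_{0},s]}$ is a past-directed causal curve from $p$ to $\eta(s)$, giving $\eta(s)\le p$, while $\eta(s)\in J^{+}(p)$ gives $p\le\eta(s)$; as $\eta(s)\ne p$, the points $p$ and $\eta(s)$ lie on a closed causal curve, contradicting causality. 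Hence $(M,g)$ is not causally simple.

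The hard part will be the converse, and within it the ``no break at the splice point'' step. It rests on three ingredients that must all be available: (i) causal simplicity identifies $\partial I^{+}(p)$ with $J^{+}(p)\setminus I^{+}(p)$, so that $\eta$ genuinely lies in $J^{+}(p)$; (ii) a causal curve between two chronologically unrelated points is an achronal null geodesic, which removes the corner of the concatenation at the splice point; and (iii) geodesic uniqueness then forces the ray to pass through $p$ itself, producing the forbidden closed causal curve. Remaining minor points I would still check are the degenerate situation $\lambda\equiv p=z$ (harmless, as it already places $p$ on $\eta$) and, in the forward direction, that ``past-inextendible maximal null geodesic ray'' is precisely what achronality of $\partial I^{+}(p)$ delivers.
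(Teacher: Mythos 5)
There is nothing in the paper to compare against here: Theorem 9 is quoted from the submitted reference [V2] and no proof of it appears in this manuscript, so your argument can only be judged on its own merits. On those merits it is essentially correct, and it is the natural argument: in the forward direction you pick $x\in\overline{J^{+}(p)}\setminus J^{+}(p)$, observe $x\in\partial I^{+}(p)$ (using $J^{+}(p)\subseteq\overline{I^{+}(p)}$), apply Theorem 8 with $S=\{p\}$ (legitimate, since $x\neq p$, which is the caveat Penrose's statement really needs), rule out a past endpoint on $p$, and upgrade ``null geodesic in an achronal boundary'' to ``maximal null ray'' via achronality of $\partial I^{+}(p)$; the time-dual case of $J^{-}(q)$ non-closed gives alternative (1). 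Note that this actually yields the sharper form the paper uses in the proof of Theorem 10 (a past null ray through the point $q\in\overline{J^{+}(p)}\setminus J^{+}(p)$ lying in $\partial I^{+}(p)$), which is a point in favour of your reading of the garbled statement (the paper's ``$I^{-}(p)$'' in (1) should be $\partial I^{-}(p)$). Your converse is also sound: causal simplicity gives $\partial I^{+}(p)=J^{+}(p)\setminus I^{+}(p)$, the corner-rounding lemma (a causal curve between non-chronologically related points is an unbroken null geodesic) splices $\lambda$ onto $\eta$ without a break, geodesic uniqueness forces $p$ onto the ray, and the resulting closed causal curve contradicts causality; maximality of the hypothesized ray is not even needed. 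The one blemish is your aside that the same ray $\eta$ also ``gives the form stated in (1)'' because $\eta\subseteq\partial I^{-}(x)$: that ray is past-inextendible, whereas (1) asks for a future-inextendible ray in a past boundary, so the remark as phrased is wrong --- but it is redundant, since the disjunction (1)-or-(2) is already secured by the two cases of non-closedness and time duality.
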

{\bf Remark.} We recall that $(M,g)$ is reflecting iff $p \in \overline{I^{-}(q)}\Leftrightarrow q\in \overline{I^{+}(p)}$, for $p,q\in M$.
\begin{thm}
Let $(M,g)$ be a reflecting space-time such that its cosmological time function $\tau$ has the following properties:
\begin{itemize}
\item $\tau$ is finite;
\item $\tau\rightarrow 0$, along every past lightlike ray $\gamma$;
\end{itemize}
then $(M,g)$ is causally simple.
\end{thm}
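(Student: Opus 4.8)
The plan is to argue by contradiction via the characterisation of failure of causal simplicity in Theorem 9, killing each alternative it offers with reflectivity together with the monotonicity inequality (2) and the lower semicontinuity of $\tau$. First I would observe that $(M,g)$ is causal: the hypotheses here are exactly those of Lemma 4, so $(M,g)$ is non-totally imprisoning, and therefore causal — for a closed causal curve, traversed indefinitely, would be an inextendible causal curve imprisoned in a compact set. I would also record that every $r\in M$ has nonempty chronological past, so $\tau(r)\ge d(z,r)>0$ for any $z\in I^-(r)$; thus $\tau>0$ on all of $M$. Now assume $(M,g)$ is not causally simple. By Theorem 9 there is either (1) a point $p$ carrying a future-inextendible maximal null geodesic ray $\gamma\subseteq\partial I^-(p)$, or (2) a point $q$ carrying a past-inextendible maximal null geodesic ray $\mu\subseteq\partial I^+(q)$.

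I would dispose of case (2) first, since it is the basic computation. The ray $\mu:[0,\infty)\to M$ is a past lightlike ray, so $\tau(\mu(s))\to 0$ by hypothesis. Fix $s$. Since $\mu(s)\in\partial I^+(q)\subseteq\overline{I^+(q)}$, reflectivity gives $q\in\overline{I^-(\mu(s))}$, so there are $y_k\to q$ with $\mu(s)\in I^+(y_k)\subseteq J^+(y_k)$; then (2) gives $\tau(y_k)\le\tau(\mu(s))$, and letting $k\to\infty$ with lower semicontinuity of $\tau$ (established in the proof of Theorem 2) yields $\tau(q)\le\tau(\mu(s))$ for all $s$. As $\tau(q)>0$, this contradicts $\tau(\mu(s))\to 0$.

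For case (1) I would reduce to case (2) with the help of reflectivity and Theorem 8. Put $x:=\gamma(0)$. For each $t$ one has $\gamma(t)\in\partial I^-(p)=\overline{I^-(p)}\setminus I^-(p)$, and reflectivity converts this into $p\in\overline{I^+(\gamma(t))}\setminus I^+(\gamma(t))=\partial I^+(\gamma(t))$. Applying Theorem 8 with $S=\{\gamma(t)\}$ to the point $p\in B=\partial I^+(\gamma(t))$ produces a null geodesic $\mu_t\subseteq\partial I^+(\gamma(t))$ with future endpoint $p$ which is either past-endless or has past endpoint $\gamma(t)$. If $\mu_t$ is past-endless for some $t$, then $\mu_t$ is a past lightlike ray (maximal, being contained in an achronal boundary) lying in $\partial I^+(\gamma(t))$, and the case (2) computation with $q=\gamma(t)$ (using $\tau(\gamma(t))>0$) contradicts $\tau\to 0$ along $\mu_t$. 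So we may assume $\mu_t$ ends at $\gamma(t)$ for every $t$. Then $\gamma|_{[0,1]}\star\mu_1$ is a causal curve from $x$ to $p$; since $x\notin I^-(p)$ it cannot be deformed to a timelike curve, so it is an unbroken null geodesic, which forces $\mu_1$ to be the future continuation of $\gamma$ and hence $\gamma(s_p)=p$ for some $s_p\ge 1$. Since $\gamma$ is injective (otherwise a sub-arc of it is a closed causal curve), $\gamma(s_p+1)\ne p$, and then $\gamma|_{[s_p,\,s_p+1]}\star\mu_{s_p+1}$ is a genuine closed causal curve through $p$, contradicting causality. Both alternatives being excluded, $(M,g)$ is causally simple.

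The real obstacle is the second horn of case (1): once Penrose's boundary generator through $p$ refuses to be past-endless, $\tau$ alone is useless — along a past ray $\tau$ is entitled to decay to $0$ — and the contradiction must instead be forced as a causality violation from the fact that such a generator lies in $\partial I^+(\gamma(t))$ for \emph{every} $t$ along $\gamma$. The delicate points are making the corner-cutting argument rigorous (that $\gamma|_{[0,1]}\star\mu_1$ is genuinely a single null geodesic, so that it really is the continuation of $\gamma$) and verifying that reflectivity truly delivers $p\in\partial I^+(\gamma(t))$; the rest is a routine interplay of (2), lower semicontinuity of $\tau$, and positivity of $\tau$.
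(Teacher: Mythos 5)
Your proof is correct, and its engine --- reflectivity combined with the isotonicity inequality (2), lower semicontinuity of $\tau$, and positivity of $\tau$, played off against $\tau\to 0$ along a past lightlike ray lying in an achronal boundary --- is exactly the paper's. Where you genuinely diverge is in the first alternative of Theorem 9. The paper never confronts the future-inextendible null ray in $\partial I^{-}(p)$ head-on: it works directly from the failure of causal simplicity, i.e.\ from a point $q\in\overline{J^{+}(p)}\setminus J^{+}(p)$ (reflectivity is used only to convert a failure of closedness of $J^{-}$ into this form), notes $q\in\partial I^{+}(p)$, and applies Theorem 8 to $\partial I^{+}(p)$ itself; the Penrose generator through $q$ cannot have past endpoint $p$ because $q\notin J^{+}(p)$, so it is automatically a past lightlike ray in $\partial I^{+}(p)$, and the reflectivity/semicontinuity computation (your case (2) argument, with $p$ in place of $q$) finishes at once. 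You instead take Theorem 9's two alternatives literally, and for the future ray $\gamma\subseteq\partial I^{-}(p)$ you must apply Theorem 8 at every $\gamma(t)$ and then exclude the sub-case in which every generator has past endpoint $\gamma(t)$ by the corner argument ($\gamma|_{[0,1]}\star\mu_{1}$ is an unbroken null geodesic, hence $p$ lies on $\gamma$, hence $\gamma|_{[s_{p},s_{p}+1]}\star\mu_{s_{p}+1}$ is a closed causal curve), a causality-violation argument the paper never needs. Both routes are sound; yours is longer but self-contained relative to the literal statement of Theorem 9 and makes explicit points the paper leaves tacit ($\tau>0$ everywhere, causality via Lemma 4), while the paper's reduction buys a shorter proof in which past-endlessness of the relevant generator is immediate from $q\notin J^{+}(p)$ rather than extracted by a separate geodesic-concatenation argument.
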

\begin{proof}
Lemma 4 implies that $(M,g)$ is causal. Suppose by contradiction that $(M,g)$ is not causally simple. Hence there are $p$, $q\in M$ such that $q\in \overline{J^{+}(p)}-J^{+}(p)$ and (1) or (2) in Theorem 9 occurs:\\
In the first case there is a past lightlike ray $\gamma$ which is contained $q$ and lies in $\partial I^{+}(p)$. Since $(M,g)$ is reflecting, there is a sequence $\{p_{n}\}$, $p_{n}\rightarrow p$, such that $p_{n}\in I^{-}(q)$. Lower semi continuity of cosmological time function implies that $\tau(p)\leq liminf \tau(p_{n})\leq \tau(q)$. $q$ can be choosen arbitrary on the causal ray $\gamma$ and consequently $\tau(\gamma)\geq \tau(p)$ which is a contradiction, since $\tau\rightarrow 0$ along it.\\
Again in the second case reflectivity of $(M,g)$ implies that $q$ lies in $\partial I^{+}(p)$, hence by Theorem 8 there is a past lightlike ray $\alpha$ with end point $q$ in $\partial I^{+}(p)$. Indeed, it can not be a null geodesic from $p$ to $q$ since $q\notin J^{+}(p)$. The proof in this case is similar to case (1).
\end{proof}

\begin{thm} Let $(M,g)$ be a reflective space-time such that:
\begin{itemize}
\item $\tau$ is finite,
\item $\tau\rightarrow 0$ along past inextendible causal geodesics,
\end{itemize}
then $(M,g)$ is causally pseudoconvex.
\end{thm}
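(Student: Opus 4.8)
The plan is to derive causal pseudoconvexity from the characterisation in Theorem 7, so the first step is to secure strong causality. Since $\tau\to 0$ along \emph{every} past inextendible causal geodesic, in particular $\tau\to 0$ along every past lightlike ray, so all the hypotheses of Theorem 10 are in force and $(M,g)$ is causally simple; hence it is strongly causal and the causal relation is closed in $M\times M$. By Theorem 7 it then suffices to verify that $(M,g)$ has the causal limit geodesic segment property.

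For this, take $p_n\to p$ and $q_n\to q$ with $p\neq q$ such that each pair is joined by a causal geodesic segment $\sigma_n$. After passing to a subsequence, and relabelling $p$ and $q$ if necessary, I may assume that every $\sigma_n$ is future directed from $p_n$ to $q_n$; parametrise it by $h$-arc length, $\sigma_n:[0,a_n]\to M$. If $(a_n)$ has a bounded subsequence, the limit curve theorem produces a subsequence of the $\sigma_n$ converging uniformly to a causal curve $\sigma:[0,a]\to M$ with $\sigma(0)=p$ and $\sigma(a)=q$; since the $\sigma_n$ are geodesics and their $h$-unit initial velocities subconverge, $\sigma$ is a causal geodesic, and it is the desired limit geodesic segment from $p$ to $q$.

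The heart of the matter, and the step I expect to be the only real obstacle, is to exclude the remaining possibility $a_n\to\infty$: one must manufacture a genuine contradiction out of a family of geodesics escaping to infinity. Here the limit curve theorem in its strong form (\cite{ML}) gives, after passing to a subsequence, a past inextendible, past directed causal limit curve $\beta:[0,\infty)\to M$ with $\beta(0)=q$ and $\beta(t)=\lim_n\sigma_n(a_n-t)$ for every $t\ge 0$; as a limit of the geodesics $\sigma_n$ with subconvergent velocities it is again a causal geodesic, so $\beta$ is a past inextendible causal geodesic and, by hypothesis, $\tau(\beta(t))\to 0$ as $t\to\infty$. On the other hand $\sigma_n(a_n-t)\in J^{+}(p_n)$ for all $n$ and all $t$, and $p_n\to p$, so, the causal relation being closed, $\beta(t)\in J^{+}(p)$; applying the inequality $(2)$ we get $\tau(p)\le\tau(p)+d(p,\beta(t))\le\tau(\beta(t))$ for every $t\ge 0$, hence $\tau(p)\le\lim_{t\to\infty}\tau(\beta(t))=0$. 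This is absurd, because $\tau$ is strictly positive everywhere: for any point $x$ of any space-time $I^{-}(x)\neq\emptyset$ (it contains $\exp_x$ of a small past directed timelike vector), and any timelike curve arriving at $x$ has positive Lorentzian length, so $\tau(x)=\sup_{r\le x}d(r,x)>0$. Therefore the case $a_n\to\infty$ cannot occur, $(M,g)$ has the causal limit geodesic segment property, and Theorem 7 yields that $(M,g)$ is causally pseudoconvex.
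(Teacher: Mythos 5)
Your proposal is correct and follows essentially the same route as the paper: apply Theorem 10 (noting the hypothesis covers past lightlike rays) to get causal simplicity, reduce causal pseudoconvexity to the causal LGS property via Theorem 7, and rule out the escaping case because the resulting past inextendible limit causal geodesic through $q$ lies in $J^{+}(p)$, forcing $\tau\geq\tau(p)>0$ along it while the hypothesis forces $\tau\rightarrow 0$. Your write-up merely makes explicit some steps the paper leaves implicit (the bounded-parameter case, closedness of $J^{+}$, and positivity of $\tau$).
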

\begin{proof}
Theorem 10 implies that $(M,g)$ is causally simple. Suppose by contradiction that $(M,g)$ is not causally pseudoconvex then Theorem 7 implies that there are sequences $\{p_{n}\}$, $\{q_{n}\}$ and geodesics $\gamma_{n}$ from $p_{n}$ to $q_{n}$ such that $p_{n}\rightarrow p$, $q_{n}\rightarrow q$ but there limit curve $\gamma$ with $\gamma(0)=q$ does not contain $p$. Hence limit curve theorem implies that $\gamma(t)\in J^{+}(p)$ and consequently $\tau(p)\leq \tau(\gamma(t))$ which is a contradiction, since $\gamma$ is an inextendible past geodesic.
\end{proof}

 \begin{thm} Let $(M,g)$ be a space-time which satisfies the following properties:
\begin{itemize}
\item $\tau< \infty$;
\item $\tau\rightarrow 0$, along past inextendible causal rays;
\item $\tau$ is continuous;
\end{itemize}
then $\tau$ is regular.
\end{thm}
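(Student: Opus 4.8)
The plan is to argue by contradiction. Since the first regularity condition is already among the hypotheses, a failure of regularity produces a past-directed, past-inextendible causal curve $\eta:[0,\infty)\to M$, parametrized by $h$-arc length with $\eta(0)=p$, along which $\tau$ does not tend to $0$; as $\tau\circ\eta$ is non-increasing by the isotonicity (2) and bounded below by $0$, this means $\tau(\eta(t))\downarrow a$ for some $a>0$. First I would record two reductions. Since $\tau<\infty$, $(M,g)$ is chronological, and since the hypothesis ``$\tau\to 0$ along past inextendible causal rays'' in particular covers past lightlike rays, Lemma 4 gives that $(M,g)$ is non-totally imprisoning; with $h$ complete this forces $\eta([0,\infty))$ to be non-relatively-compact, so there is a sequence $t_n\to\infty$ with $q_n:=\eta(t_n)\to\infty$ (no convergent subsequence), and then $\tau(q_n)\to a$. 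Second, if the image of $\eta$ is achronal, then for $s<t$ the segment $\eta|_{[s,t]}$ realizes $d(\eta(t),\eta(s))=0$, hence is (after reparametrization) a maximal null geodesic, so $\eta$ is itself a past lightlike ray along which $\tau(\eta(t))\ge a>0$ — contradicting the hypothesis. If $\eta$ is not achronal, the set $\{t:\eta(t)\in I^{-}(p)\}$ is a terminal interval $(T_0,\infty)$ with $T_0<\infty$ (push-up), so by discarding the first terms of the sequence I may assume $t_n>T_0$, whence $p\in I^{+}(q_n)$ and $d(q_n,p)>0$ for every $n$.

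The heart of the proof is then to ``straighten'' $\eta$ into a geodesic ray. For each $n$ choose a limit-maximizing past-directed causal curve $\rho_n$ from $p$ to $q_n$, i.e.\ $L(\rho_n)\ge d(q_n,p)-1/n$, and extend it beyond $q_n$ to a past-inextendible causal curve $\widehat{\rho_n}:[0,\infty)\to M$ with $\widehat{\rho_n}(0)=p$. Because the base points are the constant $p$, $q_n\to\infty$, and $d(q_n,p)\le\tau(p)<\infty$, Lemma 2 applies and yields a subsequence of $\widehat{\rho_n}$ converging to a causal ray $\bar\rho:[0,\infty)\to M$ starting at $p$, which is past-directed and past-inextendible. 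To finish, fix $s\ge0$: since $q_n\to\infty$ and $h$ is complete, the $h$-length of $\rho_n$ tends to $\infty$, so $\widehat{\rho_n}(s)=\rho_n(s)$ for all large $n$; by isotonicity along the past-directed curve $\rho_n$ we have $\tau(\rho_n(s))\ge\tau(q_n)$; and because $\tau$ is continuous and $\widehat{\rho_n}(s)\to\bar\rho(s)$ along the subsequence, $\tau(\bar\rho(s))=\lim_n\tau(\rho_n(s))\ge\liminf_n\tau(q_n)=a>0$. Thus $\tau(\bar\rho(s))\ge a$ for every $s$, so $\tau$ does not tend to $0$ along the past inextendible causal ray $\bar\rho$, which contradicts the hypothesis and completes the argument.

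The step I expect to carry the whole weight of the theorem is the very last estimate: it is precisely the continuity of $\tau$ that upgrades the pointwise bound $\tau(\rho_n(s))\ge\tau(q_n)\to a$ to $\tau(\bar\rho(s))\ge a$, whereas the lower semicontinuity of $\tau$ that is always available would give only an upper bound on $\tau(\bar\rho(s))$ and is useless here — this is exactly why dropping the continuity assumption makes the statement false (cf.\ Example~1). The only other point needing care is the production of a sequence $q_n=\eta(t_n)$ escaping to infinity without assuming strong causality; this is supplied by Lemma 4 together with the completeness of $h$, and the potentially awkward case of an achronal $\eta$ is dispatched directly as above. Everything else (existence of the limit-maximizing $\rho_n$, the extension to past-inextendible curves, the application of the limit curve theorem and of Lemma 2) is routine.
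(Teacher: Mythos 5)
Your proposal is correct and follows essentially the same route as the paper: arguing by contradiction, Lemma 4 gives escape to infinity of points $q_n$ on the bad curve, limit-maximizing causal curves from the initial point to the $q_n$ together with Lemma 2 produce a past causal ray, and continuity of $\tau$ transfers the lower bound $a>0$ to that ray, contradicting the hypothesis on rays. The only differences are minor: you get the bound via isotonicity along the approximating curves and continuity at their limit points, where the paper passes through $\eta\subset\overline{I^{+}(\gamma)}$, and your separate treatment of the achronal case (to secure the chronological hypothesis of Lemma 2, the achronal curve being itself a lightlike ray) is a point of care the paper glosses over.
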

\begin{proof} Suppose by contradiction that $\tau\nrightarrow 0$ along a past inextendible causal curve $\gamma:[0,\infty)\rightarrow M$. Let $p_{n}=\gamma(t_{n})$, where $t_{n}\rightarrow \infty$ and $\gamma_{n}:[0,a_{n}]\rightarrow M$, $\gamma_{n}(0)=p$ and $\gamma_{n}(a_{n})=p_{n}$ be a limit maximizing sequence.
Since by Lemma 4 $(M,g)$ is non-total imprisoning, $\{p_{n}\}$ escapes to infinity. Consequently, the limit curve of $\gamma_{n}$, $\eta:[0,\infty)\rightarrow M$, $\eta(0)=p$ is a causal ray. $\gamma_{n}(t)\in I^{+}(p_{m})$, for $m\geq n$. Hence $\eta\subset \overline{ I^{+}(\gamma)}$. Since $\tau>a$ on $ I^{+}(\gamma)$ and it is continuous, $\tau(\eta(t))\geq a$, for every $t$, which is a contradiction.
\end{proof}
\begin{thm} Let $(M,g)$ be a space-time of dimension $n+1$, conformally embedded as an open subset in Minkowski space-time $(E^{n,1},h)$, and its cosmological time function satisfies the following properties:
\begin{itemize}
\item $\tau< \infty$;
\item $\tau\rightarrow 0$, along past causal rays;
\end{itemize}
then $\tau$ is regular.
\end{thm}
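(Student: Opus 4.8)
The plan is to deduce this from Theorem 12: since the second hypothesis here is precisely ``$\tau\to 0$ along past inextendible causal rays'', it suffices to prove that, under the present assumptions, $\tau$ is continuous, and then Theorem 12 gives that $\tau$ is regular. So the whole task is to show: \emph{if $(M,g)$ is conformal to an open subset of $(E^{n,1},h)$, $\tau<\infty$, and $\tau\to 0$ along every past causal ray, then $\tau$ is continuous.}

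Two cheap preliminaries. As causality is conformally invariant, the causal structure of $(M,g)$ is the one induced from $E^{n,1}$, and a causally convex neighbourhood of a point of $E^{n,1}$ contained in $M$ is causally convex in $M$; hence $(M,g)$ is strongly causal. Also $\tau<\infty$ makes $(M,g)$ chronological and, as ``$\tau\to 0$ along past causal rays'' covers past lightlike rays, Lemma 4 gives that $(M,g)$ is non-totally imprisoning. Since $\tau$ is always lower semi-continuous, I only need upper semi-continuity.

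Assume $\tau$ fails to be upper semi-continuous at some $q\in M$ and run the construction from the proof of Theorem 2: there are $q_n\to q$ and $\epsilon>0$ with $\tau(q_n)\ge\tau(q)+\epsilon$; points $p_n$ with $d(p_n,q_n)\ge\tau(q_n)-1/n$, so $\tau(p_n)\to 0$ and $\{p_n\}$ escapes to infinity; a limit maximizing sequence of past-directed timelike curves $\gamma_n:[0,a_n]\to M$ with $\gamma_n(0)=q_n$ and $\gamma_n(a_n)=p_n$; and, by Lemma 2 and the argument of Theorem 2, a limit curve $\gamma:[0,\infty)\to M$ of past inextendible extensions of the $\gamma_n$, which is a past causal ray with $\gamma(0)=q$ and, not being timelike, a past lightlike ray, so that $\tau\to 0$ along $\gamma$. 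The Lorentzian length being upper semi-continuous along uniformly convergent causal curves and each $\gamma|_{[0,T]}$ being a null segment, $\limsup_nL(\gamma_n|_{[0,T]})\le L(\gamma|_{[0,T]})=0$ for every $T>0$; with $L(\gamma_n)\ge d(p_n,q_n)-\epsilon_n\ge\tau(q)+\epsilon-1/n-\epsilon_n$ this gives $\liminf_nL(\gamma_n|_{[T,a_n]})\ge\tau(q)+\epsilon$, and as $\gamma_n|_{[T,a_n]}$ is a past causal curve from $\gamma_n(T)$ to $p_n$ we get $p_n\le\gamma_n(T)$ and, by~(2),
\[
\tau\bigl(\gamma_n(T)\bigr)\ \ge\ \tau(p_n)+d\bigl(p_n,\gamma_n(T)\bigr)\ \ge\ L\bigl(\gamma_n|_{[T,a_n]}\bigr),
\]
so $\liminf_n\tau(\gamma_n(T))\ge\tau(q)+\epsilon$ for every fixed $T$, while $\gamma_n(T)\to\gamma(T)$ and $\tau(\gamma(T))\to 0$ as $T\to\infty$.

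The use of the embedding, and the step I expect to be the real obstacle, is converting this into a contradiction. Fix Minkowski coordinates with $g=\Omega^2h$, $\Omega>0$; then $\gamma(t)=q+\varphi(t)\,v$ is a straight past-directed null half-segment ($v$ past null, $\varphi$ increasing). Since each $\gamma_n$ is timelike and past-directed, $\gamma_n(T)-p_n$ lies in the open future cone of $E^{n,1}$ for all $T\in[0,a_n)$; letting $n\to\infty$ with $\gamma_n\to\gamma$ uniformly on compacta forces an arbitrarily long initial piece of the half-line $\{q+s\,v:s\ge 0\}$ into a bounded neighbourhood of the closed future cone with apex $p_n$. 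As that cone is convex with recession cone equal to itself and $v$ is \emph{past} null, this is impossible when the half-line is unbounded; hence $\varphi$ is bounded and $\gamma$ exits $M$ at a finite point $\gamma_\infty=q+(\sup\varphi)\,v\in\partial M$, with $p_n\le_{E^{n,1}}\gamma_\infty$ for all $n$. It remains to exclude this residual case: here $\tau\to 0$ along $\gamma$ as $\gamma(t)\to\gamma_\infty$, while the $\gamma_n(T)\to\gamma_\infty$ carry $\tau(\gamma_n(T))\ge\tau(q)+\epsilon$, so a discontinuity of fixed positive size accumulates at the finite boundary point $\gamma_\infty$; I would rule this out by working in the flat chart near $\gamma_\infty$, using $p_n\in J^-_{E^{n,1}}(\gamma_\infty)$ and finiteness of $\tau$ to re-run the limit-curve/convexity argument for a family of past null rays through points converging to $\gamma_\infty$. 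Making the convexity estimate uniform in $T$ and controlling $\Omega$ along the compact segments $\gamma|_{[0,T]}$ are the technical points; the rest is bookkeeping with the limit curve theorem and inequality~(2).
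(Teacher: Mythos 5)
Your reduction to Theorem 12 is legitimate in principle (under these hypotheses the missing ingredient is exactly continuity of $\tau$), but the proposal does not actually deliver continuity, and that is where the whole difficulty sits. Two concrete problems. First, your convexity/recession-cone step presupposes control of the points $p_n$ inside $E^{n,1}$: ``$\{p_n\}$ escapes to infinity'' only refers to the auxiliary complete Riemannian metric on $M$, so in Minkowski coordinates the $p_n$ may either accumulate on $\partial M$ or run off to infinity, e.g.\ in a past timelike direction, in which case $J^+_{E}(p_n)$ eventually engulfs any fixed compact set and no contradiction follows from the statement that an initial piece of the null half-line lies near $J^+_{E}(p_n)$; the clean recession-cone argument works only if the $p_n$ converge (or stay bounded) in $E^{n,1}$, which you have not shown. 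Second, and more importantly, the ``residual case'' you explicitly leave open --- the limit lightlike ray being bounded in $E^{n,1}$ and terminating at a finite point $\gamma_\infty\in\partial M$ --- is not a marginal case to be swept up by ``re-running the argument near $\gamma_\infty$'': for an open subset of Minkowski space this is the generic behaviour of an inextendible ray, and the sketch you give (``working in the flat chart near $\gamma_\infty$ \dots the rest is bookkeeping'') contains no mechanism that actually produces a contradiction there, since lower semicontinuity of $\tau$ points the wrong way and inequality (2) gives no upper bound on $\tau(\gamma_n(T))$ from data at $\gamma_\infty$. So the proof has a genuine gap.

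For comparison, the paper does not go through continuity at all. It assumes $\tau\to a>0$ along a past inextendible causal curve $\gamma$, builds (exactly as in Theorem 12) limit maximizing curves from the future endpoint $p$ of $\gamma$ to points $p_n$ on $\gamma$, and obtains a limit causal ray $\eta$ issuing from $p$ with $\eta\subseteq\overline{I^{+}(\gamma)}$. The hypothesis ``$\tau\to 0$ along past causal rays'' is then applied to $\eta$ itself (not only to lightlike rays, as in your use of Theorem 2), which forces $\eta$ to leave $I^{+}(\gamma)$ and to lie thereafter in the achronal boundary $\partial I^{+}(\gamma)$; the contradiction is then extracted from the structure of achronal boundaries in Minkowski space (the relation $\langle w,\alpha'\rangle_h=0$ with the null generators $\alpha$ of $\partial I^{+}(\gamma)$). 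If you want to salvage your route, you would have to prove upper semicontinuity directly near a finite boundary point of $M$ in $E^{n,1}$, which is essentially a new argument rather than bookkeeping; otherwise the paper's direct use of the hypothesis on the limit ray $\eta$ is the shorter path.
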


 \begin{proof} Suppose that $\tau\rightarrow a$, $a\neq 0$, along a past inextendible causal curve $\gamma:[0,\infty)\rightarrow M$. Let $p_{n}$, $\gamma_{n}$ and $\eta$ be as in the pervious theorem.
It is trivial that $\eta\subseteq \overline{I^{+}(\gamma)}$. $\partial (I^{+}(\gamma))$ is an achronal boundary. $\tau> a$ on $ I^{+}(\gamma)$, but $\tau\rightarrow 0$ on $\eta$. Hence $\eta$ has to escape $I^{+}(\gamma)$ in a point $x=\eta(s)$ and $\eta\vert [s,\infty)\subseteq \partial I^{+}(\gamma)$. This implies that $\eta$ is not a timelike ray. In addtion, if $\eta'(0)=w$ then $<w,\alpha'>_{h}=0$, for any lightlike ray $\alpha$ in $\partial I^{+}(\gamma)$, since $M$ is an open subset of Minkowski space-time. Hence $<w,\eta'>_{h}=0$, which is a contradiction.
\end{proof}


\begin{thebibliography}{00}
\bibitem{AG}
Andersson, L., Galloway, G., and Howard, R., ``The cosmological time function," Class. Quantum Grav. {\bf 15}, 309 (1998).
\bibitem{B}
Beem, J. K., Ehrlich, P. E., and Easley, K. L., {\em Global Lorentzian geometry} (Marcel Dekker, New york, 1996).
\bibitem{BP} Beem, J. K., and Parker, P., ``Pseudoconvexity and Geodesic Connectedness'' annali di Matematica pura ed applicate. (1989) 137-142 Vol CLV (IV).
\bibitem{NE}
Ebrahimi, N., ``Some observations on cosmological time functions," J. Math. Phys. {\bf 54}, (2013).
\bibitem{G2}
Eschenburg, J. H., and Galloway, G. J.,`` Lines in space-times," Comm. Math. Phys., {\bf 148}, 209-216 (1992).
\bibitem{M}
Minguzzi, E., ``Characterization of some causality conditions through the continuity of Lorentzian distance," J. Geom. Phys. {\bf 59}, 827 (2009).
\bibitem{ME}
Minguzzi, E., ``Non-imprisonment conditions on space-time," J. Math. Phys. {\bf 49}, 062503 (2008).
\bibitem{M2}
Minguzzi, E., ``Lorentzian causality theory," Living Rev. Relativ. {\bf 22}, 1, (2019).

 \bibitem{ML}
Minguzzi, E., ``Limit curve theorems in Lorentzian geometry," J. Math. Phys. {\bf 49}, 092501 (2008).
\bibitem{MC}
Minguzzi, E., ``Chronological space-times without lightlike lines are stably causal," Commun. Math. Phys. {\bf 288}, 801–819 (2008).
\bibitem{P}
Penrose, R., {\em Theqniques of Differential Topology in Relativity} ( CBMS-NSF regional conference series in applied mathematics. SIAM, Philadelphia, (1972).

 \bibitem{V1}
Vatandoost. M., Pourkhandani. R., and Ebrahimi. N., `` On null and causal pseudoconvex space-times," J. Math. Phys. {\bf 60}, 012502 (2019).
\bibitem{V2}
Vatandoost. M., Pourkhandani. R., and Ebrahimi. N., `` On null and causal pseudoconvex space-times and naked singularities'', submitted.
\bibitem{CH}
Xiaojun, C., and Liang, Jin., ``The negative of regular cosmological time function is a viscosity solution," J. Math. Phy. {\bf 55}, 102705 (2014).
\end{thebibliography}
 \end{document}